\pgfplotsset{compat = newest}
\definecolor{darkgreen}{rgb}{0,0.5,0}
\definecolor{darkblue}{rgb}{0,0,0.8}
\definecolor{darkred}{rgb}{0.8,0,0}
\newtheorem{definition}{Definition}[section]
\newtheorem{lemma}[definition]{Lemma}
\newtheorem{theorem}[definition]{Theorem}
\newtheorem{property}[definition]{Property}
\newcommand{\bigo}{\mathcal{O}}
\newcommand\defeq{\stackrel{\mathclap{\normalfont{\mbox{\text{\tiny{def}}}}}}{=}}
\DeclareMathOperator*{\E}{\mathbb{E}}
\DeclareMathOperator*{\Var}{\mathrm{Var}}
\newcommand{\LL}{\textsf{LogLog}\xspace}
\newcommand{\HLL}{\textsf{HyperLogLog}\xspace}
 \title{\bf\LARGE Cardinality estimation using Gumbel distribution.}
 \author{\LARGE Aleksander~Łukasiewicz}
 \author{Przemys\l{}aw~Uzna\'nski}
\affil{\large Institute of Computer Science, University of Wrocław, Poland.}
 \date{}
\begin{document}
\maketitle

 \thispagestyle{empty}


\begin{abstract}
\emph{Cardinality estimation} is the task of approximating the number of distinct elements in a large dataset with possibly repeating elements.  \LL and \HLL (c.f. Durand and Flajolet [ESA 2003], Flajolet et al. [Discrete Math Theor. 2007]) are small space sketching schemes for cardinality estimation, which have both strong theoretical guarantees of performance and are highly effective in practice. 
This makes them a highly popular solution with many implementations in big-data systems (e.g. Algebird, Apache DataSketches, BigQuery, Presto and Redis).
However, despite having simple and elegant formulation, both the analysis of \LL and \HLL are extremely involved -- spanning over tens of pages of analytic combinatorics and complex function analysis.

 We propose a modification to both \LL and \HLL that replaces discrete geometric distribution with a continuous Gumbel distribution. This leads to a very short, simple and elementary analysis of estimation guarantees, and smoother behavior of the estimator. 
\end{abstract}

\section{Introduction.}
In cardinality estimation problem we are presented with a dataset consisting of many items, that might be repeating. 
Our goal is to process this dataset efficiently, to estimate the number $n$ of \emph{distinct} elements it contains. Here, efficiently means in small auxiliary space, and fast processing per each item. 
A natural scenario to consider is a \emph{stream} processing of a dataset, with stream of events being either element \emph{insertions} to the multiset and \emph{queries} of multiset cardinality.

A folklore information theoretic analysis reveals that this problem over universe of $u$ elements requires at least $u$ bits of memory to answer queries exactly. However, in many practical settings it suffices to provide an approximate of the cardinality. An example scenario is estimating number of unique addresses in packets that a router observes, in order to detect malicious behaviors and attacks. Here limited computational capabilities of the router and sheer volume of data observed over e.g. day ask for specialized solutions.

The theoretical study of this problem was initiated by seminal work of Flajolet and Martin \cite{DBLP:journals/jcss/FlajoletM85}. Two follow-up lines of research follow. First, we mention
\cite{DBLP:conf/stoc/AlonMS96,
DBLP:conf/random/Bar-YossefJKST02,
DBLP:conf/soda/Bar-YossefKS02,
DBLP:conf/soda/Blasiok18,
DBLP:conf/vldb/Gibbons01,
DBLP:conf/spaa/GibbonsT01,
DBLP:conf/pods/KaneNW10} on the upper-bound side and  
\cite{DBLP:conf/stoc/AlonMS96,
DBLP:conf/coco/BrodyC09,
DBLP:conf/focs/IndykW03,
DBLP:conf/soda/JayramW11,
DBLP:conf/soda/Woodruff04} on lower-bound side. 
Those works focus on $(\varepsilon,\delta)$-guarantees, meaning that they guarantee outputting $(1+\varepsilon)$-multiplicative approximation of the number of distinct elements, with probability at least $1-\delta$. 
The high-level takeaway message is that one can construct approximate schemes that provide $(1+\varepsilon)$-multiplicative approximation to the number of distinct elements, using an order of $\varepsilon^{-2}$ space, and that this dependency on $\varepsilon$ is tight. 
More specifically, the work of Błasiok~\cite{DBLP:conf/soda/Blasiok18} settles the bit-complexity of the problem, by providing $\bigo(\frac{\log \delta^{-1}}{\varepsilon^2} + \log n)$ bits of space upper-bound, and this complexity is optimal by a matching lowerbound \cite{DBLP:conf/soda/JayramW11}.
To achieve such small space usage, a number of issues have to be resolved, and a very sophisticated machinery of expanders and pseudo-randomness is deployed.

The other line of work is more practical in nature, and focuses on providing variance bounds for efficient algorithm. The bounds are usually of the form $\sim 1/\sqrt{k}$ where $k$ is some measure of space-complexity of algorithms (usually, corresponds to the number of parallel estimation processes). This includes work of 
\cite{beyer2009distinct,
chen2011distinct,
cohen2015all,
DBLP:conf/esa/DurandF03,
DBLP:journals/ton/EstanVF06,
flajolet2007hyperloglog,
gerin2006efficient,
DBLP:journals/dam/Giroire09,
lumbroso2010optimal,
DBLP:journals/corr/abs-2007-08051,
DBLP:conf/kdd/Ting14,
viola2012data}. We now focus on two specific algorithms,  namely \LL \cite{DBLP:conf/esa/DurandF03} and later refined to \HLL \cite{flajolet2007hyperloglog}. The guarantees provided for variance are approximately $1.3/\sqrt{k}$ and $1.04/\sqrt{k}$ respectively, when using $k$ integer registers.
 Both are based on simple principle of observing the maximal number of trailing zeroes in binary representation of hashes of elements in the stream, although they vary in the way they extract the final estimate from this observed value (we will discuss those details in the following section). 
 In addition to being easy to state and provided with theoretical guarantees, they are highly practical in nature. 
 We note a following works on algorithmic engineering of practical variants \cite{DBLP:journals/corr/Ertl17,DBLP:conf/edbt/HeuleNH13,xiao2017better}, with actual implementations e.g. in  Algebird \cite{Algebird}, BigQuery \cite{BigQuery}, Apache DataSketch \cite{Apache}, Presto \cite{Presto} and Redis \cite{Redis}.

Despite its simplicity and popularity, \LL and \HLL are exceptionally tough to analyze. 
We note that both papers analyzing \LL and later \HLL use a heavy machinery of tools from analytic combinatorics and complex function analysis to analyze the algorithm guarantees, such as Mellin transform from complex analysis, poissonization for algorithm analysis, and analytical depoissonization (to unpack the main tool used in the paper requires another tens of pages from \cite{szpankowski2011average}). Additionally, all of this is presented in a highly compressed form. Thus the analysis is not easily digestible by a typical computer scientist, and has to be accepted ``as is'' in a black-box manner, without actually unpacking it.

This creates an unsatisfactory situation where one of the most popular and most elegant algorithms for the cardinality estimation problem has to be treated as a black-box from the perspective of its performance guarantees. 
It is an obstacle both in terms of popularization of the \LL and \HLL algorithms, and in terms of scientific progress. Authors note that those algorithms are generally omitted during  majority of theoretical courses on streaming and big data algorithms.

\subsection*{Our contribution.}
Our contribution comes in two factors. First, we observe that a key part of \LL and \HLL algorithms is counting the trailing zeroes in the binary representation of a hash of element.  This random variable is distributed according to geometric distribution. Both \LL and \HLL use the maximal value observed over all elements of the count of trailing zeroes to estimate the cardinality. However, the distribution of many discrete random variables drawn from identical geometric distributions is not distributed according to a geometric distribution. This is unwieldy to handle in the analysis in \cite{flajolet2007hyperloglog}. We propose to replace geometric distribution with Gumbel distribution, which has the following crucial property: 
\\

\emph{If $X_1,\ldots,X_k$ are independent random variables drawn from Gumbel distribution, then $Z = \max(X_1,\ldots,X_k) - \ln(k)$ is also distributed according to the same Gumbel distribution.} 
\\

\noindent This lets us to simplify extraction of value of $k$ from  $\max(X_1,\ldots,X_k)$, since we are always dealing with the same type of error (Gumbel distribution) on top of value of $\ln(k)$.

Our second contribution comes in the form of simple analysis of performance guarantees of the estimation. Instead of analyzing the variance of the estimator itself, we show bounds on intermediate process of maximum of Gumbel random variables. This requires application of some basic probabilistic inequalities and multinomial identities to bound it in the context of stochastic averaging (we discuss this later in the paper).

\section{Related work.}
The key concept used in virtually all cardinality estimation results, can be summarized as follows: given universe $U$ of elements, we start by picking a hash-function. 
Then, given subset $M \subseteq U$ which cardinality we want to estimate, we proceed by applying $h$ to every element of $M$ and operate only on $M' = \{h(x) : x \in M\} \subset [0,1]$. 
The next step is computing an \emph{observable} -- i.e. a quantity that only depends on the underlying set and is independent of replications. Finally step is estimating of the cardinality from the observable.

For example \cite{DBLP:conf/random/Bar-YossefJKST02} uses $h: M \to [0,1]$ and a value $y = \min M' = \min_{x \in M} h(x)$ as an observable. 
We expect $y \sim \frac{1}{n+1}$, thus $\frac{1}{y}-1$ is used as an estimate of cardinality $n$.
However, since we need to overcome the variance, we might need to average over many independent instances of the process, in order to achieve a good estimation.
In this particular example, to get an $(1 + \varepsilon)$ approximation, we need to average over $\bigo(\varepsilon^{-2})$ independent repetitions of the algorithm. 
Therefore, the total memory usage becomes $\bigo(\varepsilon^{-2} \log n)$ bits.

\subsection*{Stochastic averaging.}
\emph{Stochastic averaging} is a technique that in this setting works as follows: instead of processing each of elements in each of $k$ processes independently (which is a bottleneck), we partition our input into $k$ disjoint sub-inputs: $M = M_1 \cup \ldots \cup M_k$, and have each observable follow only processing of a single sub-input. 
This is achieved by picking a second hash function $h': M \to \{1,\ldots,k\}$, and when processing an element $x$, it is assigned to $M_i$ where $i = h'(x)$ is decided solely on hash of $x$. 
Thus we expect each $M_i$ to contain roughly $n/k$ elements. Note that actual number of elements in all $M_i$ follows \emph{multinomial distribution}, and this presents an additional challenge in the analysis.

\subsection*{\LL sketching.}
Consider a following: we hash the elements to bitstrings, that is $h: M \to \{0,1\}^\infty$, and consider the bit-patterns observed. For each element find $\textsf{bit}(x)$ such that $h(x)$ has a prefix $0^{\textsf{bit}(x)}1$.  Value $\textsf{bit}(x)=c$ should be observed once every $\sim 2^{c}$ different hashes, and can be used to estimate the cardinality. The observable used in \LL is the value of $\max_x \textsf{bit}(x)$ among all elements. Since we expect its value to be roughly of order of $\log n$, we maintain the value of $\max \textsf{bit}(x)$ on $\bigo(\log \log n)$ bits. 

A single observable produces a value $t = \max t(x)$. Denote the observables produced over separate sub-streams as $t_1,\ldots,t_k$. We expect the values of $t_i$ to be such that $2^{t_i} \sim n/k$. One can easily show, that for any $t_i$, we have $\E[2^{t_i}] = \infty$, thus arithmetic averaging over $2^{t_i}$ is not a feasible strategy. However, a geometric average works in this setting, and we expect the $k \left( \prod_i 2^{t_i}\right)^{1/k}$ to be an estimate for $n$ (one needs a normalizing constant that depends solely on $k$). The variance analysis shows that the variance of the estimation is roughly $1.3/\sqrt{k}$.

\subsection*{\HLL sketching.}
\HLL (\cite{flajolet2007hyperloglog}) is an improvement over \LL with a following observation, that a \emph{harmonic average} achieves better averaging over \emph{geometric average}. Thus \HLL is constructed by substituting the estimation to be $k^2 \left( \sum_i 2^{-t_i} \right)^{-1}$ with some normalizing constant (depending on $k$). Resulting algorithm has variance which is roughly $1.04/\sqrt{k}$.

In fact it can be shown that the harmonic average is optimal here in this setting: among observables that constitute of taking maximum of a hash function, harmonic average gives is both \emph{maximum likelihood estimator} and \emph{minimum variance estimator} (see e.g. \cite{clifford2012statistical}).  However, those claims are strict only without stochastic averaging.

\section{Preliminaries.}

\paragraph*{Computation model.}
We assume oracle access to a perfect source of randomness, that is a hash function $h: [u] \to \{0,1\}^\infty$. If the sketch demands it, we allow it to access multiple independent such sources, which can be simulated with help of bit or arithmetic operations starting with a single such source a single one. The oracle access is a standard assumption in this line of work (c.f. discussion in \cite{DBLP:journals/corr/abs-2007-08051}) meant to decouple bit-storage of randomness from algorithm analysis. 

Besides that, we assume standard RAM model, with words of size $\log u$ and standard arithmetic operations on those words taking constant time.

\paragraph*{Gumbel distribution.}
We use a following distribution, which originates from \emph{extreme value theory}.

\begin{definition}[Gumbel distribution \cite{gumbel1935valeurs}]
Let $\mathsf{Gumbel}(\mu)$ denote the distribution given by a following CDF:
$$F(x) = e^{-e^{-(x-\mu)}}.$$
Its probability density function is given by
$$f(x) = e^{- e^{-(x-\mu)}} e^{-(x-\mu)}.$$
\end{definition}

\begin{figure}
\begin{tikzpicture}
\begin{axis}[
xmin = -2, xmax = 10,
ymin = 0, ymax = 0.5,
]
\addplot[
domain = 0:10,
samples = 11,
only marks,
black,
] {(1-2^(-x-1))^1 - (1-2^(-x))^1};
\addplot[
domain = 0:10,
samples = 200,
smooth,
black,
] {(1-2^(-x-1))^1 - (1-2^(-x))^1};
\addplot[
domain = 0:10,
samples = 11,
only marks,
black,
] {(1-2^(-x-1))^2 - (1-2^(-x))^2};
\addplot[
domain = 0:10,
samples = 200,
smooth,
black,
] {(1-2^(-x-1))^2 - (1-2^(-x))^2};
\addplot[
domain = 0:10,
samples = 11,
only marks,
black,
] {(1-2^(-x-1))^4 - (1-2^(-x))^4};
\addplot[
domain = 0:10,
samples = 200,
smooth,
black,
] {(1-2^(-x-1))^4 - (1-2^(-x))^4};
\addplot[
domain = 0:10,
samples = 11,
only marks,
black,
] {(1-2^(-x-1))^8 - (1-2^(-x))^8};
\addplot[
domain = 0:10,
samples = 200,
smooth,
black,
] {(1-2^(-x-1))^8 - (1-2^(-x))^8};
\addplot[
domain = 0:10,
samples = 11,
only marks,
black,
] {(1-2^(-x-1))^16 - (1-2^(-x))^16};
\addplot[
domain = 0:10,
samples = 200,
smooth,
black,
] {(1-2^(-x-1))^16 - (1-2^(-x))^16};
\addplot[
domain = 0:10,
samples = 11,
only marks,
black,
] {(1-2^(-x-1))^32 - (1-2^(-x))^32};
\addplot[
domain = 0:10,
samples = 200,
smooth,
black,
] {(1-2^(-x-1))^32 - (1-2^(-x))^32};
\addplot[
domain = 0:10,
samples = 11,
only marks,
black,
] {(1-2^(-x-1))^64 - (1-2^(-x))^64};
\addplot[
domain = 0:10,
samples = 200,
smooth,
black,
] {(1-2^(-x-1))^64 - (1-2^(-x))^64};
\end{axis}
\end{tikzpicture}
\begin{tikzpicture}
\begin{axis}[
xmin = -2, xmax = 10,
ymin = 0, ymax = 0.5,
]
\addplot[
domain = -2:10,
samples = 200,
smooth,
thick,
black,
] {exp(-exp(-x))*exp(-x) )};
\addplot[
domain = -2:10,
samples = 200,
smooth,
thick,
black,
] {exp(-exp(-x + ln(2)))*exp(-x + ln(2)) )};
\addplot[
domain = -2:10,
samples = 200,
smooth,
thick,
black,
] {exp(-exp(-x + ln(4)))*exp(-x + ln(4)) )};
\addplot[
domain = -2:10,
samples = 200,
smooth,
thick,
black,
] {exp(-exp(-x + ln(8)))*exp(-x + ln(8)) )};
\addplot[
domain = -2:10,
samples = 200,
smooth,
thick,
black,
] {exp(-exp(-x + ln(16)))*exp(-x + ln(16)) )};
\addplot[
domain = -2:10,
samples = 200,
smooth,
thick,
black,
] {exp(-exp(-x + ln(32)))*exp(-x + ln(32)) )};
\addplot[
domain = -2:10,
samples = 200,
smooth,
thick,
black,
] {exp(-exp(-x + ln(64)))*exp(-x + ln(64)) )};
\end{axis}
\end{tikzpicture}
\caption{Distribution of $\max\{X_1,\ldots,X_k\}$ for $k\in \{1,2,4,8,16,32,64\}$ where $X_i$ iid random variables distributed according to discrete Geometric distribution (on the left) and Gumbel distribution (on the right). Discrete distribution given by $f_k(x) = (1-2^{-x-1})^k - (1-2^{-x})^k$ is drawn with continuous intermediate values for smooth drawing.}
\end{figure}
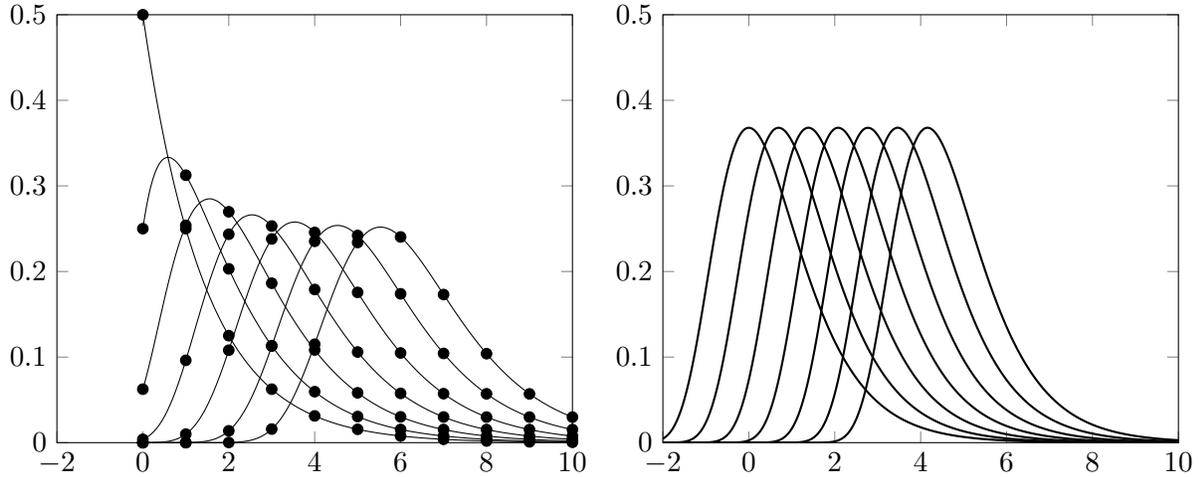

We note that when $x \to \infty$, then $f(x) \approx e^{-(x - \mu)}$, thus the Gumbel distribution has the exponential tail on the positive side. The distribution has a doubly-exponential tail when $x \to -\infty$.

We also have the following basic properties when $X \sim \mathsf{Gumbel}(\mu)$ (c.f. \cite{gumbel1935valeurs}):
\begin{equation}
\label{eq:expandvar}
\E[X-\mu] =  \gamma \approx 0.5772,\quad\quad\Var[X] = \frac{\pi^2}{6} \approx 1.6449.
\end{equation}
and
\begin{equation}
\label{eq:expexp}
\E[e^{-X}] = e^{-\mu} \int_{-\infty}^{\infty} e^{- e^{-x}} e^{-2x}  dx =  e^{-\mu},
\end{equation}
\begin{equation}
\label{eq:expvar}
\Var[e^{-X}] = \E[(e^{-X})^2] - e^{-2\mu} = e^{-2\mu} \int_{-\infty}^{\infty} e^{- e^{-x}} e^{-3x}  dx - e^{-2\mu}= e^{-2\mu}.
\end{equation}

\begin{property}[Sampling from Gumbel distribution.]
If $t \in [0,1]$ is drawn uniformly at random, then $X = -\ln(-\ln t) + \mu$ has the distribution $\mathsf{Gumbel}(\mu)$.
\end{property}

The following property is a key property used in our algorithm analysis. It essentially states that Gumbel distribution is invariant under taking the maximum of independent samples (up to normalization).\footnote{In fact, the Fisher–Tippett–Gnedenko theorem (c.f. \cite{de2007extreme}) states, that for any distribution $\mathcal{D}$, if for some $a_n, b_n$ the limit $\lim_{n \to \infty} ( \frac{\max(X_1,\ldots,X_n)-b_n}{a_n})$ converges to some non-degenerate distribution, where $X_1,\ldots X_n \sim \mathcal{D}$ (and are independent), then it converges to one of three possible distribution families: a Fréchet distribution, a Weibull distribution or a Gumbel distribution. Thus, those three distributions can be viewed as a counterpart to normal distribution, wrt to taking maximum (instead of repeated additions).}
\begin{property}
\label{maxproperty}
If $x_1,x_2,\ldots,x_n \sim \mathsf{Gumbel}(0)$ are independent random variables, then for $Z = \max(x_1,\ldots,x_n)$ we have $Z \sim \mathsf{Gumbel}(\ln n)$.
\end{property}
\begin{proof}
$$\Pr(Z < x) = \prod_{i} \Pr(x_i < x) = (e^{e^{-x}})^n = e^{e^{-x+\ln n}}.\qedhere$$
\end{proof}

\paragraph*{Multinomial distribution.}
We now discuss the multinomial distribution and its role in analyzing stochastic averaging.
\begin{definition}
We say that $X_1,\ldots,X_k$ are distributed according to $\textsf{Multinomial}(n;p_1,\ldots,p_k)$ distribution for some $\sum_i p_i = 1$, if, for any $n_1+\ldots+n_k = n$ there is
$$\Pr[X_1 = n_1 \wedge \ldots \wedge X_k = n_k] =  {n \choose n_1,\ldots,n_k} p_1^{n_1} \ldots p_k^{n_k}.$$
\end{definition}

Consider a process of distributing $n$ identical balls to $k$ urns, where each the probability for any ball to land in urn $i$ is $p_i$, fully independently between balls. Then the numbers of total balls in each urn $X_1,\ldots,X_k$ follows $\textsf{Multinomial}(n;p_1,\ldots,p_k)$ distribution.

For our purposes we are interested in the following: let $f$ be some real-value function. Lets say that we have a stochastic process of estimating cardinality in a stream, that is if $n$ distinct elements appear, the process outputs a value that is concentrated around its expected value $f(n)$. Now, we apply stochastic averaging, by splitting the stream into sub-streams, and feed each sub-stream to estimation process separately, say $n_i$ going into sub-stream $i$. We can look at the following random variables:
$$S_n = \E[ \sum_i f(n_i)] \qquad\qquad \textrm{and} \qquad\qquad P_n = \E[ \prod_i f(n_i)].$$
We expect $S_n \approx k f(n/k)$ and $P_n \approx f(n/k)^k$. Deriving actual concentration bounds for specifically chosen functions $f$ gives us insight on how well harmonic average or geometric average performs when concentrating cardinality estimation processes under stochastic averaging.

The analysis of stochastic averaging for a \emph{generic} function $f$ (under some sanity constraints) has been done in \cite{clifford2012statistical}. We actually derive a stronger set of bounds for very specific functions: $f(x) = \frac{1}{x+1}$ and $f(x) = \ln(x+1)$.

\section{Geometric average estimation.}

Following algorithm shows that if we are fine with slower updates, then Gumbel distribution plays nicely into estimating cardinality. The main idea is just to hash each element into a real-value distributed according to Gumbel distribution, and take maximum across all values. 

\SetKwProg{procedure}{Procedure}{}{}
\begin{algorithm}[H]
\label{alg1}
\caption{Cardinality estimation using Gumbel distribution.}
	\DontPrintSemicolon
\procedure{\textsc{Init()}}{
	pick $h_1,\ldots,h_k : U \to [0,1]$ as independent hash functions\;
	$X_1 \gets -\infty,\ldots,X_k \gets -\infty$\;
}
\procedure{\textsc{Update($x$)}}{
	\For{ $1 \le i \le k$}{
		$v \gets -\ln(-\ln h_i(x))$ \tcp{Gumbel(0)  RV}
		$X_i \gets \max(v, X_i)$\;
	}
}
\procedure{\textsc{GeometricEstimate()}}{
	\Return $Z = \exp(-\gamma + \frac{1}{k} \sum_i X_i)\;$
}

\end{algorithm}

\begin{theorem}
\label{firstTheorem}
Applied to a stream of $n$ distinct elements, Algorithm~\ref{alg1} outputs $Z$ such that $|Z - n| \le n \cdot (\pi k^{-1/2} + \bigo(k^{-1}))$ holds with constant probability $5/6$. It uses $k$ real-value registers and spends $\bigo(k)$ operations per single processed element of the input.
\end{theorem}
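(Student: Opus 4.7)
The plan is to exploit Property~\ref{maxproperty} to reduce the theorem to a standard Chebyshev-style concentration argument, since Algorithm~\ref{alg1} does \emph{not} use stochastic averaging — each of the $k$ registers sees every element, through an independent hash.

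First I would observe that after processing $n$ distinct elements, each register $X_i$ equals $\max_{x \in M} (-\ln(-\ln h_i(x)))$, i.e.\ the maximum of $n$ independent $\mathsf{Gumbel}(0)$ samples. By Property~\ref{maxproperty}, this means $X_i \sim \mathsf{Gumbel}(\ln n)$. Crucially, since the hash functions $h_1, \ldots, h_k$ are independent, the registers $X_1, \ldots, X_k$ are mutually independent. Setting $Y_i = X_i - \ln n$, we have that $Y_1, \ldots, Y_k$ are i.i.d.\ $\mathsf{Gumbel}(0)$ with $\E[Y_i] = \gamma$ and $\Var[Y_i] = \pi^2/6$ by~(\ref{eq:expandvar}).

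Next, let $\bar{Y} = \frac{1}{k} \sum_i Y_i = \frac{1}{k}\sum_i X_i - \ln n$. By linearity and independence,
$$\E[\bar{Y}] = \gamma, \qquad \Var[\bar{Y}] = \frac{\pi^2}{6k}.$$
Chebyshev's inequality with threshold $t = \pi/\sqrt{k}$ yields
$$\Pr\!\left[\,|\bar{Y} - \gamma| > \tfrac{\pi}{\sqrt{k}}\,\right] \le \frac{\pi^2/(6k)}{\pi^2/k} = \frac{1}{6},$$
so with probability at least $5/6$ the quantity $S \defeq \bar{Y} - \gamma = \frac{1}{k}\sum_i X_i - \gamma - \ln n$ satisfies $|S| \le \pi/\sqrt{k}$. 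Note that $Z = \exp(-\gamma + \tfrac{1}{k}\sum_i X_i) = n \cdot e^S$.

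Finally, I would convert the additive bound on $S$ into the multiplicative bound on $Z$. Writing $Z - n = n(e^S - 1)$ and Taylor-expanding $e^S = 1 + S + O(S^2)$ in the regime $|S| \le \pi/\sqrt{k} = O(1)$, we obtain
$$|Z - n| \le n\bigl(|S| + O(S^2)\bigr) \le n\bigl(\pi k^{-1/2} + \bigo(k^{-1})\bigr),$$
as claimed. The space and per-element update cost statements are immediate from inspection of the algorithm: $k$ registers are maintained and each update performs one hash evaluation and one comparison per register. I do not expect any real obstacle here — the whole point of switching to the Gumbel distribution is precisely that Property~\ref{maxproperty} collapses the otherwise delicate analysis of the maximum of $n$ discrete geometric samples into a trivial i.i.d.\ Chebyshev calculation; the only minor care is in the Taylor step, which is benign because $S$ is of order $k^{-1/2}$.
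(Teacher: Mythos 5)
Your proposal is correct and follows essentially the same route as the paper: identify each register as $\mathsf{Gumbel}(\ln n)$ via Property~\ref{maxproperty}, apply Chebyshev to the (independent) sum with threshold $\pi\sqrt{k}$, and exponentiate. You merely phrase the concentration in terms of the average rather than the sum and make the final Taylor step $e^S = 1 + S + \bigo(S^2)$ explicit, which the paper leaves implicit.
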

Thus, setting $k = \varepsilon^{-2}$ gives a constant probability for Algorithm~\ref{alg1} outputting a $(1 + \varepsilon)$-multiplicative estimation of cardinality.

\begin{proof}
We analyze Algorithm~\ref{alg1} after processing stream of $n$ distinct elements. For each $X_i$, its value is a maximum of $n$ random variables drawn from $ \textsf{Gumbel}(0)$ distribution, so by Property~\ref{maxproperty} we have that $X_i \sim \textsf{Gumbel}(\ln n)$. Moreover, repeated occurrences of elements in the stream do not change the state of the algorithm.

By Equation~\eqref{eq:expandvar}
$$\E[X_i] = \gamma + \ln n\qquad \textrm{and} \qquad \Var[X_i] = \frac{\pi^2}{6}.$$
Thus for $X =  \sum_i X_i$ there is $\E[X] = k \gamma + k \ln n$ and $\Var[X] =  k \frac{\pi^2}{6}$. By Chebyshev's inequality: 
$$\Pr( |X- \E[X]| \ge \pi  \sqrt{k}) \le 1/6.$$ 
Since $Z = \exp(-\gamma + X/k)$, we have that (with probability at least $5/6$)
$$n \cdot\exp\left(1-\pi k^{-1/2}\right) \le Z \le n \cdot\exp\left(1 + \pi k^{-1/2}\right).\qedhere$$.
\end{proof}

\subsection{Stochastic averaging.}
We refine Algorithm~\ref{alg1} with stochastic averaging. Application of the technique is straightforward, but we need to take care of initialization of $X_i$ registers.

\begin{algorithm}[H]
\label{alg2}
\caption{Cardinality estimation using Gumbel distribution and stochastic averaging.}
	\DontPrintSemicolon
\procedure{\textsc{Init()}}{
	pick $h : U \to \{1,\ldots,k\}$ and $r : U \to [0,1]$ as independent hash functions\;
	\For{ $1 \le i \le m$}{
		$X_i \gets -\ln(-\ln u_i)$ where $u_i$ is picked uniformly from $[0,1]$. \tcp{Gumbel(0)  RV}
	}
}
\procedure{\textsc{Update($x$)}}{
	$c \gets h(x)$\;
	$v \gets -\ln(-\ln r(x))$ \tcp{Gumbel(0)  RV}
	$X_c \gets \max(v, X_c)$\;
	
}

\procedure{\textsc{GeometricEstimate()}}{
	\Return $Z = k \cdot \exp(-\gamma + \frac{1}{k} \sum_i X_i)$
}
\end{algorithm}
\begin{theorem}
\label{th:42}
Applied to a stream of $n$ distinct elements, Algorithm~\ref{alg2} outputs $Z$ such that $|Z - n|  = \pi n k^{-1/2}+\bigo(k)$ holds with probability $2/3$. It uses $k$ real-value registers and spends constant number of operations per single processed element of the input.
\end{theorem}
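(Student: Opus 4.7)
The plan is to condition on the bucket counts $(n_1,\ldots,n_k) \sim \textsf{Multinomial}(n;1/k,\ldots,1/k)$ produced by $h$ and then mirror the argument of Theorem~\ref{firstTheorem}. Register $X_i$ is initialised to an independent $\textsf{Gumbel}(0)$ sample and subsequently maximised with one $\textsf{Gumbel}(0)$ sample per element that hashes to bucket $i$, so conditionally on $n_i$ it is the maximum of $n_i+1$ i.i.d.\ $\textsf{Gumbel}(0)$ variables; Property~\ref{maxproperty} then gives $X_i \sim \textsf{Gumbel}(\ln(n_i+1))$, and the registers are conditionally independent. Setting $Y_i = X_i - \ln(n_i+1)$ produces i.i.d.\ $\textsf{Gumbel}(0)$ variables whose conditional law does not depend on $\vec{n}$, hence $\vec{Y}$ is jointly independent of $\vec{n}$, yielding the multiplicative decomposition
$$Z \;=\; k\cdot\exp\!\Bigl(-\gamma+\tfrac{1}{k}\textstyle\sum_i Y_i\Bigr)\cdot\Bigl(\textstyle\prod_i (n_i+1)\Bigr)^{1/k} \;=:\; k\cdot A\cdot B,$$
with $A$ and $B$ independent.

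For the factor $A$, exactly as in Theorem~\ref{firstTheorem}, Equation~\eqref{eq:expandvar} gives $\E[\tfrac{1}{k}\sum_i Y_i] = \gamma$ and $\Var[\tfrac{1}{k}\sum_i Y_i] = \pi^2/(6k)$, so Chebyshev bounds $|\tfrac{1}{k}\sum_i Y_i - \gamma| \le \pi k^{-1/2}$ with probability at least $5/6$, and on this event $A = 1 \pm \pi k^{-1/2} + \bigo(k^{-1})$.

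For the factor $B$, AM-GM gives the deterministic upper bound $B \le \tfrac{1}{k}\sum_i(n_i+1) = n/k + 1$. For a matching lower bound, write $n_i+1 = (n/k+1)(1+\delta_i)$; since $\sum_i \delta_i = 0$, the Taylor expansion $\ln(1+\delta) = \delta - \delta^2/2 + \bigo(\delta^3)$ collapses to $\ln B = \ln(n/k+1) - \tfrac{1}{2k}\sum_i \delta_i^2 + \bigo(\tfrac{1}{k}\sum_i |\delta_i|^3)$. The multinomial identity $\E[\sum_i(n_i - n/k)^2] = n(1-1/k)$ gives $\tfrac{n+k}{2k}\,\E[\sum_i\delta_i^2] = \bigo(k)$ in every regime of $n$ versus $k$, so a Chebyshev-type bound on $\sum_i \delta_i^2$ yields $kB = n + k + \bigo(k) = n + \bigo(k)$ on a constant-probability event (the degenerate case of many empty buckets is handled directly by the AM-GM upper bound combined with a crude lower bound in terms of the number of non-empty buckets). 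Union-bounding the two good events and multiplying out gives $Z = kAB = (n+\bigo(k))(1 \pm \pi k^{-1/2} + \bigo(k^{-1})) = n \pm \pi n k^{-1/2} + \bigo(k)$ with probability at least $2/3$; the time and space guarantees are immediate from inspection of the pseudocode.

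The main obstacle I expect is the second-moment control of $B$: one needs concentration of $\sum_i \delta_i^2$ around its multinomial expectation tight enough that the additive slack in $kB - n$ stays $\bigo(k)$ with constant probability uniformly in $n$, not merely in expectation. The regime $n = \Theta(k)$ is the most delicate, and is likely where the specialised multinomial identities advertised in the introduction come in; for $n \gg k$ a direct fourth-moment computation on multinomial marginals suffices, while for $n \ll k$ the deterministic AM-GM upper bound already forces $kB \le n+k$ and the lower bound follows from bounding the number of singleton buckets.
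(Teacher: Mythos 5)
Your overall architecture matches the paper's: condition on the multinomial bucket counts, note that $X_i \mid \vec{n} \sim \mathsf{Gumbel}(\ln(n_i+1))$ with conditional independence, split off the pure Gumbel fluctuation (your factor $A$, controlled by Chebyshev exactly as in Theorem~\ref{firstTheorem}), and bound the bucket-size term $B = \bigl(\prod_i (n_i+1)\bigr)^{1/k}$ from above by Jensen/AM--GM. All of that is sound and is what the paper does (the paper works additively with $Y = \sum_i \ln(n_i+1) = k\ln B$ rather than multiplicatively, but that is cosmetic).

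The genuine gap is in your lower bound on $B$. The Taylor expansion $\ln(1+\delta_i) = \delta_i - \delta_i^2/2 + \bigo(|\delta_i|^3)$ with $\delta_i = (n_i - n/k)/(n/k+1)$ is only valid when $\delta_i$ is bounded away from $-1$; for a bucket with $n_i \ll n/k$ (in particular an empty bucket when $n \gg k$) one has $\delta_i \to -1$ and $\ln(1+\delta_i) = -\ln(n/k+1)$, while the truncated expansion sits near $-3/2$ and, worse, truncating at second order gives an \emph{upper} bound on $\ln(1+\delta_i)$ for $\delta_i<0$ --- the wrong direction for lower-bounding $\ln B$. This is not a corner case: in the regime $k \ll n \ll k\ln k$ a constant fraction of buckets is empty in expectation ($k e^{-n/k}$ of them), and your proposed fallback --- a ``crude lower bound in terms of the number of non-empty buckets'' --- does not close it, since the product $\prod_i (n_i+1)$ over $m$ non-empty buckets can deterministically be as small as $2^{m-1}(n-m+2)$, which is far too weak; one genuinely needs a probabilistic argument there. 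The paper sidesteps all of this with Lemma~\ref{lemmultinomial}: it computes $\E[e^{-Y}] = \E\bigl[\prod_i \tfrac{1}{n_i+1}\bigr] \le (k/n)^k$ exactly via the multinomial identity $\sum \binom{n+k}{i_1+1,\ldots,i_k+1} \le k^{n+k}$, and a single application of Markov's inequality then gives $\Pr[Y \le k\ln(n/k) - t] \le e^{-t}$ uniformly in $n$ and $k$, with no case analysis. You correctly identified this lower bound as the main obstacle and guessed that a specialised multinomial identity is needed; that identity is precisely the missing ingredient, and without it (or an equivalent uniform tail bound on $\sum_i \ln(n_i+1)$) the proof is incomplete.
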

Thus, setting $k = \varepsilon^{-2}$ gives a constant probability for Algorithm~\ref{alg2} outputting a $(1 + \varepsilon)$-multiplicative estimation of cardinality, assuming $n \ge k^{3/2} = \varepsilon^{-3}$.

\begin{proof}
We analyze Algorithm~\ref{alg2} after processing stream $S$ of $n$ distinct elements.
Let $n_1, \ldots, n_k$ be the respective numbers of unique items hashed by $h$ into buckets $\{1,\ldots,k\}$ respectively. It follows that $n_1,\ldots,n_k \sim \mathsf{Multinomial}(n;\frac1k,\ldots,\frac1k)$. For each $X_i$, its value is a maximum of $n_i+1$ random variables drawn from $ \textsf{Gumbel}(0)$ distribution (taking into account $n_i$ updates to its value and initial value). Thus conditioned on specific values of $n_1,\ldots,n_k$, we have that $X_i$ follows the Gumbel distribution. More specifically $X_i | n_1,\ldots,n_k \sim \mathsf{Gumbel}(\ln (n_i+1)).$ We also observe, that for $i\not=j$, $X_i | n_1,\ldots,n_k$ and $X_j | n_1,\ldots,n_k$ are independent random variables.

Denote $X = \sum_i X_i$ and $Y = \sum_i \ln(n_i+1)$. We split our analysis of $X$ into two parts. First, almost identical analysis to one from Theorem~\ref{firstTheorem} follows:
$$\E[X_i\ |\ n_1,\ldots,n_k] = \gamma + \ln(n_i+1)\qquad\textrm{and}\qquad\Var[X_i\ |\ n_1,\ldots,n_k]=\frac{\pi^2}{6}$$
thus
$$\Pr( |X - (k \gamma + Y) | \ge \pi \sqrt{k}\ |\ n_1,\ldots,n_k ) \le 1/6.$$
We can drop the conditional part and write
\begin{equation}
\label{eqfirstbound}
\Pr( |X - (k \gamma + Y) | \ge \pi \sqrt{k}) \le 1/6.
\end{equation}

We now show concentration of the second part of sum. First, by convexity we get.

\begin{equation}
\label{eqsecondbound}
Y = \sum_i \ln(n_i+1) \le k \ln(n/k+1).
\end{equation}

By Lemma~\ref{lemmultinomial} we get that
\begin{equation}
\label{eqthirdbound}
\Pr[Y \ge k \ln(n/k) - \ln 6] \ge 5/6.
\end{equation}

Combining Equations~\eqref{eqfirstbound}, \eqref{eqsecondbound} and \eqref{eqthirdbound} we reach that the following bound holds with probability at least $2/3$:
$$  k\gamma + (k \ln(n/k) - \ln 6) - \pi \sqrt{k} \le X \le k\gamma + k \ln((n+k)/k) + \pi \sqrt{k} $$
or equivalently, since $Z = k \exp(-\gamma + X/k)$
$$ n \cdot ( 1 -\pi k^{-1/2}  - \bigo(k^{-1})) \le Z \le (n+k) \cdot (1 + \pi  k^{-1/2} + \bigo(k^{-1})).\qedhere$$
\end{proof}

\begin{lemma}
\label{lemmultinomial}
Let $n_1,\ldots,n_k \sim \mathsf{Multinomial}(n;1/k,\ldots,1/k)$ and let $Y = \sum_i \ln(n_i+1)$. Then $Y \ge k \ln(n/k) - t$ with probability at least  $1-e^{-t}$.
\end{lemma}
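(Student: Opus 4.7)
The plan is a Markov-inequality argument applied to the multiplicative form $e^{-Y} = \prod_i \frac{1}{n_i+1}$. The event $Y < k\ln(n/k) - t$ is the same as $W > e^t$ for $W := (n/k)^k \prod_i \frac{1}{n_i+1}$, so by Markov's inequality it suffices to prove $\E[W] \le 1$, that is
\[
\E\left[\prod_{i=1}^k \frac{1}{n_i+1}\right] \le \left(\frac{k}{n}\right)^k.
\]

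To evaluate this expectation I would use the identity $\frac{1}{n_i+1} = \int_0^1 x^{n_i}\,dx$, swap expectation and integration (justified by boundedness on the compact domain), and invoke the multinomial probability generating function identity $\E\left[\prod_i x_i^{n_i}\right] = \left(\frac{x_1 + \cdots + x_k}{k}\right)^n$, which holds since $p_i = 1/k$ and the multinomial can be realized as $n$ independent categorical draws. This yields the clean integral representation
\[
\E\left[\prod_i \frac{1}{n_i+1}\right] = \frac{1}{k^n}\int_{[0,1]^k}(x_1 + \cdots + x_k)^n\,d\mathbf{x}.
\]

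The integral is then bounded by an elementary substitution. Setting $y_i = 1-x_i$ rewrites the integrand as $(k - \sum_i y_i)^n = k^n\bigl(1-\tfrac{1}{k}\sum_i y_i\bigr)^n$, and since $\tfrac{1}{k}\sum_i y_i \in [0,1]$ the pointwise inequality $(1-z)^n \le e^{-nz}$ applies. The resulting bound factorizes across coordinates:
\[
\int_{[0,1]^k} e^{-n\sum_i y_i/k}\,d\mathbf{y} = \prod_{i=1}^k \int_0^1 e^{-n y_i/k}\,dy_i = \left(\frac{k\bigl(1-e^{-n/k}\bigr)}{n}\right)^k \le \left(\frac{k}{n}\right)^k.
\]
Combining with the $k^n$ prefactor gives exactly $\E\left[\prod_i \frac{1}{n_i+1}\right] \le (k/n)^k$, and Markov closes the argument.

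The only genuine obstacle is spotting the integral representation $\frac{1}{n_i+1} = \int_0^1 x^{n_i}\,dx$: once it is in place, the multinomial PGF removes all the dependencies between the $n_i$ in one stroke, and the remaining analysis is a one-line substitution plus $1-z \le e^{-z}$. In particular, neither Stirling-number identities, saddle-point asymptotics, nor heavy concentration-of-measure machinery are required, which keeps the lemma in the same elementary spirit as the rest of the paper.
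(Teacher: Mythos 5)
Your proof is correct, and while it shares the paper's overall skeleton --- apply Markov's inequality to $e^{-Y}=\prod_i\frac{1}{n_i+1}$ and reduce everything to the moment bound $\E\bigl[\prod_i\frac{1}{n_i+1}\bigr]\le(k/n)^k$ --- the way you establish that moment bound is genuinely different. The paper stays entirely discrete: it rewrites $\frac{n!}{(i_1+1)!\cdots(i_k+1)!}$ as $\binom{n+k}{i_1+1,\ldots,i_k+1}\frac{n!}{(n+k)!}$, bounds the sum of multinomial coefficients by $k^{n+k}$, and finishes with $\frac{n!}{(n+k)!}\le n^{-k}$. You instead use the integral representation $\frac{1}{n_i+1}=\int_0^1 x^{n_i}\,dx$, collapse the dependence between the $n_i$ via the multinomial probability generating function, and bound the resulting integral with the substitution $y_i=1-x_i$ and $1-z\le e^{-z}$. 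Both arguments are equally elementary and yield slightly-better-than-needed constants (the paper gets $k^k\,n!/(n+k)!$, you get $(k/n)^k(1-e^{-n/k})^k$); your route has the minor advantage of producing an exact closed-form integral representation that removes the correlations in one stroke, at the cost of a Fubini interchange, while the paper's route is a purely finite combinatorial manipulation. Either would serve the lemma equally well.
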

\begin{proof}
Consider $\E[e^{-Y}]$. We have
\begin{align*}
\E\limits_{\substack{n_1,..,n_k\sim\\  \mathsf{Multinomial}}}[e^{-Y}] &= \E\limits_{\substack{n_1,..,n_k\sim\\  \mathsf{Multinomial}}} \left[ \prod_i \frac{1}{n_i+1} \right]\\
&= \sum_{i_1+\ldots+i_k = n} \Pr[n_1 = i_1 \wedge \ldots \wedge n_k = i_k] \prod_i \frac{1}{i_i+1}\\
&= \sum_{i_1+\ldots+i_k = n} k^{-n} {n \choose i_1,\ldots,i_k} \prod_i \frac{1}{i_i+1}\\
&= k^{-n}  \sum_{i_1+\ldots+i_k = n} \frac{n!}{(i_1+1)! \cdot \ldots \cdot (i_k+1)!}\\
&= k^{-n}  \sum_{i_1+\ldots+i_k = n}  {n+k \choose i_1+1,\ldots,i_k+1} \frac{n!}{(n+k)!} \\
&\le k^{-n}  k^{n+k} \frac{n!}{(n+k)!} \\
&\le \left(\frac{k}{n}\right)^k
\end{align*}

Thus, for any $t>0$, by Markov's inequality
\begin{align*}
\Pr[Y \le k \ln(n/k) - t] &= \Pr[e^{-Y} \ge e^{t - k \ln(n/k)}]\\ 
&= \Pr[e^{-Y} \ge e^t \cdot \E[e^{-Y}]]\\ 
&\le e^{-t}.\tag*{\qedhere}
\end{align*}
\end{proof}

\subsection{Discretization.}

Presented sketches use $k$ real-value registers, which is in disadvantage when compared with \LL and \HLL, where only $k$ integers are used, each taking $\bigo(\log \log n)$ bits. We now discuss how to reduce the memory footprint of the algorithms. 

\paragraph*{Simple rounding.} First we note that rounding the registers to nearest multiplicity of $\varepsilon$ for some $\varepsilon>0$ introduces at most $\exp(1+\varepsilon) = 1 + \varepsilon + \bigo(\varepsilon^2)$ multiplicative distortion, both with the estimation procedure $\textsf{GeometricEstimate()}$ from Algorithm~\ref{alg1} and \ref{alg2} and with the estimation procedure $\textsf{HarmonicEstimate()}$ from Algorithm~\ref{alg3} and \ref{alg4} (see Appendix). For example, for \ref{alg1}, we have, assuming $X'_i$ are rounded registers: $|X'_i - X_i| \le \varepsilon$, and so for $Z' = \exp(-\gamma + \frac{1}{k} \sum_i X'_i)$ there is $\frac{Z'}{Z} = \exp(\frac{1}{k}\sum_i(X'_i - X_i))$, so $\exp(-\varepsilon) \le \frac{Z'}{Z} \le \exp(\varepsilon)$. Since each register stores w.h.p. values of magnitude $2 \log n$, it can be implemented on integer registers using $\bigo(\log \frac{\log n}{\varepsilon}) = \bigo(\log \log n + \log \varepsilon^{-1})$ bits.

\paragraph*{Randomized rounding.} We now show how to eliminate the $\log \varepsilon^{-1}$ term. We define the following \emph{shift-rounding}, for shift value $c \in [0,1)$:
$$f_c(x) \defeq \lfloor x+c \rfloor-c.$$
We note two key properties: 
\begin{enumerate} 
\item shift-rounding commutes with maximum, that is, for any $x_1,\ldots,x_k$, we have $\max(f_c(x_1), \ldots, f_c(x_k)) = f_c( \max(x_1,\ldots,x_k))$,
\item If $c \sim U[0,1]$, then $f_c(x) \sim U[x-1,x]$, where $U[a,b]$ denotes uniform distribution on range $[a,b]$.
\end{enumerate}

We thus show how to adapt the Algorithm \ref{alg2} using shift-rounding.

\begin{algorithm}[H]
\label{alg_rounding}
\caption{Algorithm \ref{alg2} with shift-rounding.}
	\DontPrintSemicolon
\procedure{\textsc{Init()}}{
	pick $h : U \to \{1,\ldots,k\}$ and $r : U \to [0,1]$ as independent hash functions\;
	\For{ $1 \le i \le m$}{
		$c_i$ is picked uniformly from $[0,1]$\;
		$X_i \gets \lfloor-\ln(-\ln u_i) + c_i\rfloor - c_i$\;
		 where $u_i$ is picked uniformly from $[0,1]$. \tcp{Gumbel(0)  RV}
	}
}
\procedure{\textsc{Update($x$)}}{
	\For{ $1 \le i \le k$}{
		$v \gets \lfloor -\ln(-\ln h_i(x)) + c_i \rfloor-c_i$\;
		$X'_i \gets \max(v, X'_i)$\;
	}
}
\procedure{\textsc{GeometricEstimate()}}{
	\Return $Z = k \exp(-\gamma + \frac{1}{2} + \frac{1}{k} \sum_i X'_i)\;$
}

\end{algorithm}

The analysis of Algorithm~\ref{alg_rounding} comes from following invariant: if Algorithms~\ref{alg_rounding} and \ref{alg2} are run side-by-side on the same input stream, at any given moment there is $X'_i = f_{c_i}(X_i)$. Thus, we have the following $X'_i \sim \textsf{Gumbel}(\ln n_i) - U[0,1]$. So $\E[X'_i] = \gamma - \frac12 + \ln n_i$, and $\Var[X'_i] = \frac{\pi^2}{6} + \frac14$. Additionally, $X'_i$ are independent as $X_i$ were independent. Thus an equivalent of Theorem \ref{firstTheorem} applies to Algorithm~\ref{th:42} with slightly worse constants.

\begin{theorem}
Applied to a stream of $n$ distinct elements, Algorithm~\ref{alg_rounding} outputs $Z$ such that $|Z - n|  = \bigo(n k^{-1/2}+k)$ holds with probability $2/3$. It uses $k$ integer registers of size $\bigo(\log \log n)$ bits each and spends constant number of operations per single processed element of the input.
\end{theorem}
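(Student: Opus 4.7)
The plan is to couple Algorithm~\ref{alg_rounding} with Algorithm~\ref{alg2} run on the same randomness and to reuse the analysis of Theorem~\ref{th:42} almost verbatim. The paragraph preceding the statement already identifies the right invariant: if both algorithms consume the same hash outputs, then after every update the rounded register satisfies $X'_i = f_{c_i}(X_i)$, because $f_c$ commutes with maxima. Combined with the second property of $f_c$ (that $f_{c_i}(y) \sim y - U[0,1]$ when $c_i \sim U[0,1]$ is independent of $y$), this yields
$$X'_i \mid n_1,\ldots,n_k \;\sim\; \mathsf{Gumbel}(\ln(n_i+1)) - U[0,1],$$
with conditional independence across buckets inherited from Algorithm~\ref{alg2} and the independence of the shifts $c_i$. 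In particular, writing $n_{\bullet}$ for $(n_1,\ldots,n_k)$,
$$\E[X'_i \mid n_{\bullet}] = \gamma - \tfrac{1}{2} + \ln(n_i+1), \qquad \Var[X'_i \mid n_{\bullet}] = \tfrac{\pi^2}{6} + \tfrac{1}{12}.$$

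With these moments I would repeat the three-step argument of Theorem~\ref{th:42}. Setting $X' = \sum_i X'_i$ and $Y = \sum_i \ln(n_i+1)$, Chebyshev applied conditionally on $n_\bullet$ and then marginalized gives
$$\Pr\bigl(\,|X' - (k\gamma - k/2 + Y)| \ge c\sqrt{k}\,\bigr) \le 1/6$$
for the explicit constant $c = \sqrt{\pi^2 + 1/2}$. Convexity yields the deterministic upper bound $Y \le k \ln(n/k + 1)$, and Lemma~\ref{lemmultinomial} with $t = \ln 6$ gives $Y \ge k \ln(n/k) - \ln 6$ with probability $5/6$. A union bound, followed by exponentiating the resulting bracket on $X'/k$ and multiplying by $k \exp(-\gamma + 1/2)$, produces
$$n \bigl(1 - \bigo(k^{-1/2})\bigr) - \bigo(1) \;\le\; Z \;\le\; (n+k)\bigl(1 + \bigo(k^{-1/2})\bigr)$$
with probability at least $2/3$, which rearranges to $|Z - n| = \bigo(n k^{-1/2} + k)$.

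For the representation and runtime claims, the key observation is that $X'_i$ always has the form $j_i - c_i$ with $j_i \in \mathbb{Z}$, so only the integer $j_i$ needs to be stored. A standard two-sided Gumbel tail bound (the negative tail is doubly exponential) combined with a union bound over the $k$ buckets shows that, with probability $1 - o(1)$, every $j_i$ lies in an interval of length $\bigo(\log n)$; capping $j_i$ at $\pm \Theta(\log n)$ does not change $Z$ on this event, and each register then fits in $\bigo(\log \log n)$ bits. Each update hashes $x$, computes $h(x)$ to locate a single bucket, evaluates one $-\ln(-\ln \cdot)$, applies one floor, and performs one comparison, for a total of $\bigo(1)$ RAM operations.

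The only real difficulty is bookkeeping: confirming that the added $-U[0,1]$ perturbation (i) inflates per-register variance only by the additive constant $\tfrac{1}{12}$, (ii) shifts each register's mean by $-\tfrac{1}{2}$, cancelled by the $+\tfrac12$ in the estimator, and (iii) preserves independence across buckets because the shifts $c_i$ are sampled independently of everything else. None of these changes affect the dominant $\pi k^{-1/2}$ error term beyond the hidden constants, which is why the statement is phrased asymptotically rather than with an explicit coefficient as in Theorem~\ref{th:42}.
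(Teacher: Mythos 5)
Your proof is correct and follows essentially the same route as the paper: couple Algorithm~\ref{alg_rounding} with Algorithm~\ref{alg2} via the invariant $X'_i = f_{c_i}(X_i)$, observe that each register is then a $\mathsf{Gumbel}(\ln(n_i+1))$ variable minus an independent $U[0,1]$ shift (mean $-\tfrac12$ cancelled by the $+\tfrac12$ in the estimator, variance increased by a constant), and rerun the three-step argument of Theorem~\ref{th:42}. Your variance increment $\tfrac{1}{12}$ is in fact the exact value, where the paper only uses the cruder bound $\tfrac14$; this changes nothing asymptotically.
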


We note that each $X'_i$ takes values only from set $\mathbb{Z}-c_i$ of magnitude at most $2 \log n$, it can be stored using $\bigo(\log \log n)$ bits. Values of $c_i$ do not need to be stored explicitly, as those can be extracted by picking a hash function $c: \{1,\ldots,k\} \to [0,1]$ and setting $c_i = c(i)$.

We note that analogous adaptation is straightforward to other algorithms presented in this paper.

\bibliography{bib}

\begin{thebibliography}{10}

\bibitem{Algebird}
{Algebird} {HyperLogLog} implementation.
\newblock
  \href{https://twitter.github.io/algebird/datatypes/approx/hyperloglog.html}{https://twitter.github.io/algebird/datatypes/approx/\\hyperloglog.html}.
\newblock Accessed: 2020-08-01.

\bibitem{BigQuery}
Counting uniques faster in {BigQuery} with {HyperLogLog++}.
\newblock
  \href{https://cloud.google.com/blog/products/gcp/counting-uniques-faster-in-bigquery-with-hyperloglog}{https://cloud.google.com/blog/\\products/gcp/counting-uniques-faster-in-bigquery-with-hyperloglog}.
\newblock Accessed: 2020-08-01.

\bibitem{Apache}
{HyperLogLog} {Sketch}.
\newblock
  \href{https://datasketches.apache.org/docs/HLL/HLL.html}{https://datasketches.apache.org/docs/HLL/HLL.html}.
\newblock Accessed: 2020-08-01.

\bibitem{Presto}
{Presto} {HyperLogLog} function.
\newblock
  \href{https://prestodb.github.io/docs/current/functions/hyperloglog.html}{https://prestodb.github.io/docs/current/functions/hyperloglog.html}.
\newblock Accessed: 2020-08-01.

\bibitem{Redis}
{Redis} {PFCOUNT} command.
\newblock
  \href{https://redis.io/commands/pfcount}{https://redis.io/commands/pfcount}.
\newblock Accessed: 2020-08-01.

\bibitem{DBLP:conf/stoc/AlonMS96}
N.~Alon, Y.~Matias, and M.~Szegedy.
\newblock The space complexity of approximating the frequency moments.
\newblock In {\em STOC}, pages 20--29, 1996.

\bibitem{DBLP:conf/random/Bar-YossefJKST02}
Z.~Bar{-}Yossef, T.~S. Jayram, R.~Kumar, D.~Sivakumar, and L.~Trevisan.
\newblock Counting distinct elements in a data stream.
\newblock In {\em {RANDOM} 2002}, pages 1--10.

\bibitem{DBLP:conf/soda/Bar-YossefKS02}
Z.~Bar{-}Yossef, R.~Kumar, and D.~Sivakumar.
\newblock Reductions in streaming algorithms, with an application to counting
  triangles in graphs.
\newblock In {\em {SODA} 2002}, pages 623--632. {ACM/SIAM}.

\bibitem{beyer2009distinct}
K.~Beyer, R.~Gemulla, P.~J. Haas, B.~Reinwald, and Y.~Sismanis.
\newblock Distinct-value synopses for multiset operations.
\newblock {\em Communications of the ACM}, 52(10):87--95, 2009.

\bibitem{DBLP:conf/coco/BrodyC09}
J.~Brody and A.~Chakrabarti.
\newblock A multi-round communication lower bound for gap hamming and some
  consequences.
\newblock In {\em {CCC} 2009}, pages 358--368.

\bibitem{DBLP:conf/soda/Blasiok18}
J.~Błasiok.
\newblock Optimal streaming and tracking distinct elements with high
  probability.
\newblock In {\em {SODA} 2018}, pages 2432--2448.

\bibitem{chen2011distinct}
A.~Chen, J.~Cao, L.~Shepp, and T.~Nguyen.
\newblock Distinct counting with a self-learning bitmap.
\newblock {\em Journal of the American Statistical Association},
  106(495):879--890, 2011.

\bibitem{clifford2012statistical}
P.~Clifford and I.~A. Cosma.
\newblock A statistical analysis of probabilistic counting algorithms.
\newblock {\em Scandinavian Journal of Statistics}, 39(1):1--14, 2012.

\bibitem{cohen2015all}
E.~Cohen.
\newblock All-distances sketches, revisited: Hip estimators for massive graphs
  analysis.
\newblock {\em IEEE Transactions on Knowledge and Data Engineering},
  27(9):2320--2334, 2015.

\bibitem{de2007extreme}
L.~De~Haan and A.~Ferreira.
\newblock {\em Extreme value theory: an introduction}.
\newblock Springer Science \& Business Media, 2007.

\bibitem{DBLP:conf/esa/DurandF03}
M.~Durand and P.~Flajolet.
\newblock Loglog counting of large cardinalities (extended abstract).
\newblock In {\em {ESA} 2003}, pages 605--617.

\bibitem{DBLP:journals/corr/Ertl17}
O.~Ertl.
\newblock New cardinality estimation algorithms for hyperloglog sketches.
\newblock {\em CoRR}, abs/1702.01284, 2017.

\bibitem{DBLP:journals/ton/EstanVF06}
C.~Estan, G.~Varghese, and M.~E. Fisk.
\newblock Bitmap algorithms for counting active flows on high-speed links.
\newblock {\em {IEEE/ACM} Trans. Netw.}, 14(5):925--937, 2006.

\bibitem{flajolet2007hyperloglog}
P.~Flajolet, {\'E}.~Fusy, O.~Gandouet, and F.~Meunier.
\newblock Hyperloglog: the analysis of a near-optimal cardinality estimation
  algorithm.
\newblock In {\em Discrete Mathematics and Theoretical Computer Science}, pages
  137--156. Discrete Mathematics and Theoretical Computer Science, 2007.

\bibitem{DBLP:journals/jcss/FlajoletM85}
P.~Flajolet and G.~N. Martin.
\newblock Probabilistic counting algorithms for data base applications.
\newblock {\em J. Comput. Syst. Sci.}, 31(2):182--209, 1985.

\bibitem{gerin2006efficient}
L.~Gerin and P.~Chassaing.
\newblock Efficient estimation of the cardinality of large data sets.
\newblock {\em Discrete Mathematics \& Theoretical Computer Science}, 2006.

\bibitem{DBLP:conf/vldb/Gibbons01}
P.~B. Gibbons.
\newblock Distinct sampling for highly-accurate answers to distinct values
  queries and event reports.
\newblock In {\em {VLDB} 2001}, pages 541--550.

\bibitem{DBLP:conf/spaa/GibbonsT01}
P.~B. Gibbons and S.~Tirthapura.
\newblock Estimating simple functions on the union of data streams.
\newblock In {\em {SPAA} 2001}, pages 281--291.

\bibitem{DBLP:journals/dam/Giroire09}
F.~Giroire.
\newblock Order statistics and estimating cardinalities of massive data sets.
\newblock {\em Discret. Appl. Math.}, 157(2):406--427, 2009.

\bibitem{gumbel1935valeurs}
E.~J. Gumbel.
\newblock Les valeurs extr{\^e}mes des distributions statistiques.
\newblock In {\em Annales de l'Institut Henri Poincar{\'e}}, volume~5, pages
  115--158, 1935.

\bibitem{DBLP:conf/edbt/HeuleNH13}
S.~Heule, M.~Nunkesser, and A.~Hall.
\newblock Hyperloglog in practice: algorithmic engineering of a state of the
  art cardinality estimation algorithm.
\newblock In {\em {EDBT} 2013}, pages 683--692.

\bibitem{DBLP:conf/focs/IndykW03}
P.~Indyk and D.~P. Woodruff.
\newblock Tight lower bounds for the distinct elements problem.
\newblock In {\em {FOCS} 2003}, pages 283--288.

\bibitem{DBLP:conf/soda/JayramW11}
T.~S. Jayram and D.~P. Woodruff.
\newblock Optimal bounds for johnson-lindenstrauss transforms and streaming
  problems with sub-constant error.
\newblock In {\em {SODA} 2011}, pages 1--10.

\bibitem{DBLP:conf/pods/KaneNW10}
D.~M. Kane, J.~Nelson, and D.~P. Woodruff.
\newblock An optimal algorithm for the distinct elements problem.
\newblock In {\em {PODS} 2010}, pages 41--52.

\bibitem{lumbroso2010optimal}
J.~Lumbroso.
\newblock An optimal cardinality estimation algorithm based on order statistics
  and its full analysis.
\newblock {\em Discrete Mathematics \& Theoretical Computer Science}, 2010.

\bibitem{DBLP:journals/corr/abs-2007-08051}
S.~Pettie and D.~Wang.
\newblock Information theoretic limits of cardinality estimation: Fisher meets
  shannon.
\newblock {\em CoRR}, abs/2007.08051, 2020.

\bibitem{szpankowski2011average}
W.~Szpankowski.
\newblock {\em Average case analysis of algorithms on sequences}, volume~50.
\newblock John Wiley \& Sons, 2011.

\bibitem{DBLP:conf/kdd/Ting14}
D.~Ting.
\newblock Streamed approximate counting of distinct elements: beating optimal
  batch methods.
\newblock In {\em {KDD} 2014}, pages 442--451. {ACM}.

\bibitem{viola2012data}
A.~Viola, C.~Mart{\'\i}nez, J.~Lumbroso, and A.~Helmi.
\newblock Data streams as random permutations: the distinct element problem.
\newblock {\em Discrete Mathematics \& Theoretical Computer Science}, 2012.

\bibitem{DBLP:conf/soda/Woodruff04}
D.~P. Woodruff.
\newblock Optimal space lower bounds for all frequency moments.
\newblock In {\em {SODA} 2004}, pages 167--175.

\bibitem{xiao2017better}
Q.~Xiao, Y.~Zhou, and S.~Chen.
\newblock Better with fewer bits: Improving the performance of cardinality
  estimation of large data streams.
\newblock In {\em {INFOCOM} 2017}, pages 1--9.

\end{thebibliography}

\appendix

\newpage

\section{Harmonic average estimation.}

\begin{algorithm}[H]
\label{alg3}
\caption{Improved estimation for Algorithm~\ref{alg1}.}
	\DontPrintSemicolon
\textbf{\textrm{Procedure}} \textsc{Init()} \tcp{identical as in Algorithm~\ref{alg1}}
\textbf{\textrm{Update}} \textsc{Update($x$)} \tcp{identical as in Algorithm~\ref{alg1}}
\procedure{\textsc{HarmonicEstimate()}}{
	\Return $Z = k \cdot \left(\sum_i \exp(-X_i)\right)^{-1}$\;
}
\end{algorithm}
\begin{theorem}
\label{thirdTheorem}
Applied to a stream of $n$ distinct elements,  Algorithm~\ref{alg3} outputs $Z$ such that $|Z - n| \le n \cdot (2 k^{-1/2} + \bigo(k^{-1}))$ holds with constant probability $3/4$. It uses $k$ real-value registers and spends $\bigo(m)$ operations per single processed element of the input.
\end{theorem}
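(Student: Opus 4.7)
The plan is to mirror the approach of Theorem~\ref{firstTheorem}, but to apply Chebyshev's inequality to the harmonic-style statistic $S = \sum_{i=1}^k e^{-X_i}$ instead of to the arithmetic sum of the $X_i$'s. The starting point is identical: after a stream of $n$ distinct elements has been processed, Property~\ref{maxproperty} gives $X_i \sim \mathsf{Gumbel}(\ln n)$, and independence of the hash functions $h_1,\ldots,h_k$ guarantees that the $k$ registers are mutually independent. Repetitions in the stream, as before, do not affect the state.

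The key calculation I would invoke next is equations~\eqref{eq:expexp} and \eqref{eq:expvar} instantiated at $\mu = \ln n$, which give $\E[e^{-X_i}] = 1/n$ and $\Var[e^{-X_i}] = 1/n^2$. By linearity and independence, $\E[S] = k/n$ and $\Var[S] = k/n^2$. Chebyshev's inequality at two standard deviations then yields
\[
\Pr\!\left( |S - k/n| \ge 2\sqrt{k}/n \right) \le 1/4.
\]

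Next, I would invert the estimator $Z = k/S$ on this good event. Rewriting the deviation bound as $\frac{k}{n}\bigl(1 - 2 k^{-1/2}\bigr) \le S \le \frac{k}{n}\bigl(1 + 2 k^{-1/2}\bigr)$ and Taylor-expanding $(1 \pm 2 k^{-1/2})^{-1} = 1 \mp 2 k^{-1/2} + \bigo(k^{-1})$ produces the claimed bound $|Z - n| \le n \cdot (2 k^{-1/2} + \bigo(k^{-1}))$ with probability at least $3/4$. The register count and per-update cost are inherited verbatim from Algorithm~\ref{alg1}.

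There is no genuine obstacle in this argument; the reason the proof is so short is precisely the coincidence in \eqref{eq:expexp}--\eqref{eq:expvar} that $\E[e^{-X}] = \sqrt{\Var[e^{-X}]}$ for a Gumbel variable, which forces the \emph{relative} standard deviation of $S$ to scale as $1/\sqrt{k}$ uniformly in $n$. The only step requiring a sentence of care is the passage from an additive bound on $S$ to a multiplicative bound on $Z$: one must keep $2 k^{-1/2}$ bounded away from $1$ (automatic for moderately large $k$) so that the Taylor remainder can be absorbed cleanly into the $\bigo(k^{-1})$ term.
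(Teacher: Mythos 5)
Your proposal is correct and is essentially identical to the paper's own proof: both apply Property~\ref{maxproperty} to get $X_i \sim \mathsf{Gumbel}(\ln n)$, use Equations~\eqref{eq:expexp} and \eqref{eq:expvar} to compute $\E[e^{-X_i}] = 1/n$ and $\Var[e^{-X_i}] = 1/n^2$, apply Chebyshev at two standard deviations to the sum, and invert $Z = k/U$. Your added remark about keeping $2k^{-1/2}$ bounded away from $1$ when inverting is a fair point of care that the paper glosses over.
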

Thus, setting $k = \varepsilon^{-2}$ gives a constant probability for Algorithm~\ref{alg3} outputting a $(1 + \varepsilon)$-multiplicative estimation of cardinality.

\begin{proof}
We analyze Algorithm~\ref{alg3} after processing stream of $n$ distinct elements. For each $X_i$, its value is a maximum of $n$ random variables drawn from $ \textsf{Gumbel}(0)$ distribution, so by Property~\ref{maxproperty} we have that $X_i \sim \textsf{Gumbel}(\ln n)$. Moreover, repeated occurrences of elements in the stream do not change the state of the algorithm.

Denote $U_i = e^{-X_i}$. By Equations \eqref{eq:expexp} and \eqref{eq:expvar}
we have $\E[U_i] = \frac{1}{n}$ and $\Var[U_i] = \frac{1}{n^2}$. Denoting $U = \sum_i U_i$, we have $\E[U] = \frac{k}{n}$ and $\Var[U] = \frac{k}{n^2}$. Thus by standard application of Chebyshev's inequality
$$\Pr\Big[ |U - \frac{k}{n}| \le 2 \frac{\sqrt{k}}{n} \Big] \le \frac{1}{4}.$$
Taking into account that $Z = \frac{k}{U}$ we reach the claim.
\end{proof}

\subsection{Stochastic averaging.}
\begin{algorithm}[H]
\label{alg4}
\caption{Improved estimation for Algorithm~\ref{alg2}.}
	\DontPrintSemicolon
\textbf{\textrm{Procedure}} \textsc{Init()} \tcp{identical as in Algorithm~\ref{alg2}}
\textbf{\textrm{Update}} \textsc{Update($x$)} \tcp{identical as in Algorithm~\ref{alg2}}
\procedure{\textsc{HarmonicEstimate()}}{
	\Return $Z = k^2 \cdot (\sum_i \exp(-X_i))^{-1} - 1$\;
}
\end{algorithm}

\begin{theorem}
\label{mainresult}
Applied to a stream of $n$ distinct elements, Algorithm~\ref{alg4} outputs $Z$ such that $|Z - n|  = \bigo(n k^{-1/2} + n\exp(-n/k))$ holds with constant probability $3/4$. It uses $k$ real-value registers and spends constant number of operations per single processed element of the input.
\end{theorem}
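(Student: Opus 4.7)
The plan is to combine the multinomial-conditioning argument from the proof of Theorem~\ref{th:42} with the harmonic-average analysis from the proof of Theorem~\ref{thirdTheorem}. First I would condition on the partition sizes $\mathbf{n} = (n_1,\ldots,n_k) \sim \mathsf{Multinomial}(n;\tfrac1k,\ldots,\tfrac1k)$: as in Theorem~\ref{th:42}, each register satisfies $X_i \mid \mathbf{n} \sim \mathsf{Gumbel}(\ln(n_i+1))$ (accounting for initialization plus $n_i$ updates), and the $X_i$'s are conditionally independent. Setting $U_i = e^{-X_i}$ and $U = \sum_i U_i$, Equations~\eqref{eq:expexp} and \eqref{eq:expvar} applied conditionally give $\E[U_i \mid \mathbf{n}] = 1/(n_i+1)$ and $\Var[U_i \mid \mathbf{n}] = 1/(n_i+1)^2$.

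Next I would compute $\E[U] = k\cdot\E[1/(n_1+1)]$ for $n_1 \sim \mathsf{Binomial}(n,1/k)$. Using the identity $\frac{1}{j+1}\binom{n}{j} = \frac{1}{n+1}\binom{n+1}{j+1}$ exactly as in the proof of Lemma~\ref{lemmultinomial}, a short telescoping yields
\[
\E[U] = \frac{k^2}{n+1}\bigl(1 - (1-1/k)^{n+1}\bigr),
\]
so that $\tfrac{k^2}{\E[U]} - 1 - n = O(n\exp(-n/k))$, which is exactly the source of the bias term in the theorem.

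For concentration I would bound $\Var[U]$ via the law of total variance,
\[
\Var[U] = \E\bigl[\textstyle\sum_i \tfrac{1}{(n_i+1)^2}\bigr] + \Var\bigl[\textstyle\sum_i \tfrac{1}{n_i+1}\bigr].
\]
The first term is controlled by combining $\frac{1}{(j+1)^2} \le \frac{2}{(j+1)(j+2)}$ with $\frac{1}{(j+1)(j+2)}\binom{n}{j} = \frac{1}{(n+1)(n+2)}\binom{n+2}{j+2}$, giving a bound of order $k^3/n^2$. For the second term, since $x\mapsto 1/(x+1)$ is non-increasing and multinomial coordinates are negatively associated, the cross covariances are non-positive and the term is bounded by $k\,\Var[1/(n_1+1)]$, also of order $k^3/n^2$. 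Chebyshev's inequality then yields $|U - \E[U]| = O(k^{3/2}/n)$ with probability at least $3/4$.

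Finally I would conclude via a triangle inequality on $Z - n = (k^2/U - 1) - n$, splitting into the bias $|k^2/\E[U] - 1 - n| = O(n\exp(-n/k))$ and the deviation $|k^2/U - k^2/\E[U]| = \frac{k^2\,|U - \E[U]|}{U\cdot\E[U]} = O(nk^{-1/2})$, where the last step uses that $\E[U]$ and $U$ are both of order $k^2/n$ in the non-trivial regime $n \gtrsim k$. The main obstacle is the variance bound, specifically the cross-bucket term $\Var[\sum_i 1/(n_i+1)]$, for which a direct multinomial moment computation is cumbersome; the clean route is to invoke negative association of the multinomial distribution to discard covariance cross-terms and reduce the analysis to bounding the univariate $\Var[1/(n_1+1)]$ via the same binomial identity used for $\E[U]$.
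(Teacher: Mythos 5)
Your proposal is correct and its overall architecture coincides with the paper's: condition on the multinomial counts, note $X_i \mid \mathbf{n} \sim \mathsf{Gumbel}(\ln(n_i+1))$ with conditional independence, compute $\E[U]$ exactly via the binomial identity $\frac{1}{j+1}\binom{n}{j}=\frac{1}{n+1}\binom{n+1}{j+1}$ (this is the paper's Lemma~\ref{lemmultinomial2}, not Lemma~\ref{lemmultinomial}, but the identity is the same), bound $\Var[U]$ by the law of total variance, and finish with Chebyshev plus a separate accounting of the bias $k^2/\E[U]-1-n=O(ne^{-n/k})$. The one genuinely different step is your treatment of the cross-bucket term $\Var\bigl[\sum_i \tfrac{1}{n_i+1}\bigr]$: the paper (Lemma~\ref{lemmultinomial4}) computes $\E\bigl[\tfrac{1}{(n_1+1)(n_2+1)}\bigr]$ explicitly via a trinomial identity and then relies on a near-cancellation of the resulting $k^4/(n+1)^2$ term against $(\E[V])^2$, which is why a residual $\beta_k^{(n+1)/k}k^4/(n+1)^2$ term appears in its variance bound; you instead invoke negative association of the multinomial coordinates to discard all cross-covariances and reduce to $k\,\Var[1/(n_1+1)]\le \E\bigl[\sum_i \tfrac{1}{(n_i+1)^2}\bigr]=O(k^3/n^2)$. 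Your route is shorter, avoids the delicate cancellation, and even yields a slightly cleaner variance bound, but it imports a classical external fact (negative association of the multinomial, Joag--Dev and Proschan), whereas the paper's computation is deliberately self-contained and elementary, in keeping with its stated goal. Both give $\Var[U]=O(k^3/n^2)$ up to the exponentially small correction, and the final assembly via Chebyshev and $Z=k^2U^{-1}-1$ is the same; your explicit caveat that the division step needs $U,\E[U]=\Theta(k^2/n)$, i.e.\ $n\gtrsim k$, matches the regime implicit in the theorem's error term.
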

Thus, setting $k = \varepsilon^{-2}$ gives a constant probability for Algorithm~\ref{alg4} outputting a $(1 + \varepsilon)$-multiplicative estimation of cardinality, assuming $n \ge k \log k = \varepsilon^{-2} \log \varepsilon^{-1}$.

\begin{proof}
We analyze Algorithm~\ref{alg2} after processing stream $S$ of $n$ distinct elements.
Let $n_1, \ldots, n_k$ be the respective numbers of unique items hashed by $h$ into buckets $\{1,\ldots,k\}$ respectively. It follows that $n_1,\ldots,n_k \sim \mathsf{Multinomial}(n;\frac1k,\ldots,\frac1k)$. For each $X_i$, its value is a maximum of $n_i+1$ random variables drawn from $ \textsf{Gumbel}(0)$ distribution (taking into account $n_i$ updates to its value and initial value). Thus conditioned on specific values of $n_1,\ldots,n_k$, we have that $X_i$ follows the Gumbel distribution. More specifically $X_i | n_1,\ldots,n_k \sim \mathsf{Gumbel}(\ln (n_i+1)).$ We also observe, that for $i\not=j$, $X_i | n_1,\ldots,n_k$ and $X_j | n_1,\ldots,n_k$ are independent random variables.

Denote $U_i = e^{-X_i}$ and $U = \sum_i U_i$.  We derive following bound on conditional expected value

\begin{align*}
\E[U\ |\ n_1,\ldots,n_k] &= \sum_i \E[U_i\ |\ n_1,\ldots,n_k ]\\
&= \sum_i \exp(-\ln(n_i+1)) \tag*{\text{(by Equation \eqref{eq:expexp})}}\\
&= \sum_i \frac{1}{n_i+1},
\end{align*}

and bound on conditional variance

\begin{align*}
\Var[U\ |\ n_1,\ldots,n_k] &= \sum_i \Var[U_i\ |\ n_1,\ldots,n_k ] \tag*{(independence)}\\
&=\sum_i \exp(-2\ln(n_i+1)) \tag*{(by Equation \eqref{eq:expvar})}\\
&\le \sum_i \frac{2}{(n_i+1)(n_i+2)}.
\end{align*}

Denoting $V = \sum_i \frac{1}{n_i+1}$  and $W = \sum_i \frac{2}{(n_i+1)(n_i+2)}$. Also, let $\beta_k = (1-1/k)^k \le 1/e$ be a constant dependent only on $k$.

We have
\begin{align*}
\E[U] &= \E\limits_{\substack{n_1,..,n_k\sim\\  \mathsf{Multinomial}}}[ \E[U\ |\ n_1,\ldots,n_k] ]\\
&= \E\limits_{\substack{n_1,..,n_k\sim\\  \mathsf{Multinomial}}}[ V ] \tag*{(definition of $V$)}\\
&= \frac{k^2}{n+1}(1-\beta_k^{\frac{n+1}{k}}), \tag*{(by Lemma~\ref{lemmultinomial2})}
\end{align*}
and
\begin{align*}
\Var[U] &= \E\limits_{\substack{n_1,..,n_k\sim\\  \mathsf{Multinomial}}}[ \Var[U | n_1,\ldots,n_k]] + \Var\limits_{\substack{n_1,..,n_k\sim\\  \mathsf{Multinomial}}}[ \E[ U | n_1,\ldots,n_k ] ] \tag*{(Law of Total Variance)}\\
&\le \E\limits_{\substack{n_1,..,n_k\sim\\  \mathsf{Multinomial}}}[ W ] + \Var\limits_{\substack{n_1,..,n_k\sim\\  \mathsf{Multinomial}}}[ V ] \tag*{(definition of $W$ and $V$)}\\
&\le 3 \frac{k^3}{(n+1)^2} + 2 \beta_k^{\frac{n+1}{k}}\frac{k^4}{(n+1)^2}. \tag*{(Lemmas~\ref{lemmultinomial3} and \ref{lemmultinomial4})}
\end{align*}
Applying Chebyshev's inequality, we have with constant probability at least $3/4$:
$$\Big|U - \frac{k^2}{n+1}\Big| \le \frac{k^2}{n+1}\left( \beta_k^{\frac{n+1}{k}} + \bigo(k^{-1/2})+ \bigo(\beta_k^{\frac{n+1}{2k}})\right).$$

The claim follows from the fact that $Z = k^2 U^{-1} - 1$.
\end{proof}

\begin{lemma}
\label{lemmultinomial2}
Let $n_1,\ldots,n_k \sim \mathsf{Multinomial}(n;\frac1k,\ldots,\frac1k)$ and let $V = \sum_i \frac{1}{n_i+1}$. Then $\E[V] =\frac{k^2}{n+1}(1-\beta_k^{\frac{n+1}{k}})$.
\end{lemma}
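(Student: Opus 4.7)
The plan is to reduce the computation to a one-dimensional sum using symmetry, then collapse that sum via a standard binomial identity and the binomial theorem. By linearity and the fact that the marginals of the multinomial are all identical, we have
\[
\E[V] = k \cdot \E\!\left[\frac{1}{n_1+1}\right],
\]
where $n_1 \sim \mathsf{Binomial}(n, 1/k)$ (the marginal of any single coordinate of a multinomial with equal cell probabilities).

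Next I would evaluate the one-dimensional expectation using the elementary identity $\frac{1}{j+1}\binom{n}{j} = \frac{1}{n+1}\binom{n+1}{j+1}$, which converts the troublesome reciprocal into a clean binomial coefficient. This gives
\begin{align*}
\E\!\left[\frac{1}{n_1+1}\right]
&= \sum_{j=0}^{n} \binom{n}{j}\!\left(\frac{1}{k}\right)^{\!j}\!\left(1-\frac{1}{k}\right)^{\!n-j}\!\frac{1}{j+1} \\
&= \frac{1}{n+1}\sum_{j=0}^{n}\binom{n+1}{j+1}\!\left(\frac{1}{k}\right)^{\!j}\!\left(1-\frac{1}{k}\right)^{\!n-j} \\
&= \frac{k}{n+1}\sum_{i=1}^{n+1}\binom{n+1}{i}\!\left(\frac{1}{k}\right)^{\!i}\!\left(1-\frac{1}{k}\right)^{\!n+1-i},
\end{align*}
after the substitution $i=j+1$ and pulling out one factor of $k$. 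The sum on the last line is the full binomial expansion of $(1/k + 1 - 1/k)^{n+1} = 1$ minus the missing $i=0$ term $(1-1/k)^{n+1}$, so it equals $1 - (1-1/k)^{n+1}$.

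Finally, since $(1-1/k)^{n+1} = \left((1-1/k)^{k}\right)^{(n+1)/k} = \beta_k^{(n+1)/k}$ by definition of $\beta_k$, multiplying back by $k$ gives the claimed formula
\[
\E[V] = \frac{k^2}{n+1}\bigl(1 - \beta_k^{(n+1)/k}\bigr).
\]
There is no real obstacle here: the whole argument is a two-line manipulation once one recognizes that the $1/(n_i+1)$ factor is exactly what is needed to turn $\binom{n}{j}$ into $\binom{n+1}{j+1}$, at which point the binomial theorem finishes the job. The only mildly subtle point is the harmless ``raise to the power $k$ and back'' trick used to rewrite $(1-1/k)^{n+1}$ in terms of $\beta_k^{(n+1)/k}$, purely for notational consistency with the way the statement is phrased.
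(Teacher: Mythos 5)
Your proof is correct and follows essentially the same route as the paper's: reduce to $k\,\E[1/(n_1+1)]$ with $n_1\sim\mathsf{Binomial}(n,1/k)$, apply the identity $\frac{1}{j+1}\binom{n}{j}=\frac{1}{n+1}\binom{n+1}{j+1}$, and finish with the binomial theorem (the paper just writes the probabilities as $(k-1)^{n-i}/k^n$ instead of $(1/k)^j(1-1/k)^{n-j}$). No gaps.
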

\begin{proof}
Consider the following
\begin{align*}
\E[ V ] &= k \E\left[ \frac{1}{n_1+1} \right]\\
&= k \sum_{i=0}^n \Pr[n_1 = i] \frac{1}{i+1}\\
&= k \sum_{i=0}^n {n \choose i} \frac{(k-1)^{n-i}}{k^n} \cdot \frac{1}{i+1}\\
&= \frac{k}{n+1} \sum_{i=0}^n {n+1 \choose i+1} \frac{(k-1)^{n-i}}{k^n}\\
&= \frac{k}{n+1} \cdot \frac{(k-1+1)^n-(k-1)^n}{k^n}\\
&= \frac{k^2}{n+1} \cdot \left(1 - \left(1-\frac1k\right)^{n+1}\right). \tag*{\qedhere}
\end{align*}
\end{proof}

\begin{lemma}
\label{lemmultinomial3}
Let $n_1,\ldots,n_k \sim \mathsf{Multinomial}(n;\frac1k,\ldots,\frac1k)$ and let $W = \sum_i \frac{2}{(n_i+1)(n_i+2)}$. Then $\E[W] \le \frac{2k^3}{(n+1)^2}$.
\end{lemma}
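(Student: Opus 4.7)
The plan is to exploit the symmetry of the multinomial distribution to reduce the expectation to a single-coordinate binomial calculation, then use a binomial identity to collapse the resulting sum into closed form, exactly as in the proof of Lemma~\ref{lemmultinomial2}. Concretely, by symmetry and linearity,
$$\E[W] = k \cdot \E\!\left[\frac{2}{(n_1+1)(n_1+2)}\right],$$
where $n_1 \sim \mathsf{Binomial}(n,\tfrac{1}{k})$. The key algebraic identity I would use is
$$\binom{n}{i}\cdot \frac{2}{(i+1)(i+2)} \;=\; \frac{2}{(n+1)(n+2)} \binom{n+2}{i+2},$$
which follows by expanding factorials. This is the analogue of the trick in Lemma~\ref{lemmultinomial2}, where one factor of $1/(i+1)$ turned a $\binom{n}{i}$ into $\binom{n+1}{i+1}/(n+1)$; here two factors of $1/(i+1)$ and $1/(i+2)$ turn it into $\binom{n+2}{i+2}/((n+1)(n+2))$.

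Next, I would substitute this identity into
$$\E\!\left[\frac{2}{(n_1+1)(n_1+2)}\right] = \sum_{i=0}^n \binom{n}{i}\frac{(k-1)^{n-i}}{k^n}\cdot\frac{2}{(i+1)(i+2)}$$
and re-index by $j = i+2$, yielding
$$\frac{2}{(n+1)(n+2)\,k^n}\sum_{j=2}^{n+2}\binom{n+2}{j}(k-1)^{n+2-j}.$$
The inner sum is the binomial expansion of $k^{n+2} = ((k-1)+1)^{n+2}$ with the $j=0$ and $j=1$ terms removed, so it equals $k^{n+2} - (k-1)^{n+2} - (n+2)(k-1)^{n+1}$, all of which are nonnegative quantities that are subtracted from $k^{n+2}$.

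To obtain the upper bound, I would simply drop the two negative correction terms, giving
$$\E\!\left[\frac{2}{(n_1+1)(n_1+2)}\right] \;\le\; \frac{2k^2}{(n+1)(n+2)},$$
and therefore $\E[W] \le \frac{2k^3}{(n+1)(n+2)} \le \frac{2k^3}{(n+1)^2}$, where the last inequality uses $(n+1)(n+2) \ge (n+1)^2$. There is no real obstacle here: the whole proof is a two-line combinatorial manipulation, and the only place to be careful is the index shift $j=i+2$ and verifying that the omitted terms indeed have the correct sign so that dropping them preserves the inequality.
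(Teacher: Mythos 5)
Your proposal is correct and follows essentially the same route as the paper's proof: the same symmetry reduction to a single binomial coordinate, the same identity $\binom{n}{i}\frac{2}{(i+1)(i+2)} = \frac{2}{(n+1)(n+2)}\binom{n+2}{i+2}$, and the same bounding of the resulting partial binomial sum by the full expansion $k^{n+2}$. The only cosmetic difference is that you spell out explicitly which terms ($j=0$ and $j=1$) are being dropped, which the paper leaves implicit.
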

\begin{proof}
Consider the following
\begin{align*}
\E[W] &= k \E\left[\frac{2}{(n_1+1)(n_1+2)}\right]\\
&= k \sum_{i=0}^n \Pr[n_1 = i] \frac{2}{(i+1)(i+2)}\\
&= k \sum_{i=0}^n {n \choose i} \frac{(k-1)^{n-i}}{k^n} \cdot \frac{2}{(i+1)(i+2)}\\
&=\frac{2k}{(n+1)(n+2)} \sum_{i=0}^n {n+2 \choose i+2} \frac{(k-1)^{n-i}}{k^n}\\
&\le \frac{2k}{(n+1)(n+2)} \cdot \frac{(k-1+1)^{n+2}}{k^n}\\
&=\frac{2k^3}{(n+1)(n+2)}\\
&\le \frac{2k^3}{(n+1)^2}.\tag*{\qedhere}
\end{align*}
\end{proof}

\begin{lemma}
\label{lemmultinomial4}
Let $n_1,\ldots,n_k \sim \mathsf{Multinomial}(n;1/k,\ldots,1/k)$ and let $V = \sum_i \frac{1}{n_i+1}$. Then $\Var[V] \le  \frac{k^3}{(n+1)^2} +  \frac{k^4}{(n+1)^2} \cdot 2 \beta_k^{\frac{n+1}{k}}$, where $\beta_k \approx 1/e$.
\end{lemma}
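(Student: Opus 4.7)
The plan is to expand $\Var[V] = \E[V^2] - (\E[V])^2$ and exploit the exchangeability of the multinomial coordinates. Writing $V^2 = \sum_i (n_i+1)^{-2} + \sum_{i\ne j}(n_i+1)^{-1}(n_j+1)^{-1}$, exchangeability gives
$$\Var[V] = k \cdot \E\!\left[\frac{1}{(n_1+1)^2}\right] + k(k-1) \cdot \E\!\left[\frac{1}{(n_1+1)(n_2+1)}\right] - (\E[V])^2,$$
where $\E[V]$ is supplied by Lemma~\ref{lemmultinomial2}. The task reduces to evaluating (or upper-bounding) the two expectations on the right.

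For the diagonal term I would use the elementary bound $(i+1)^{-2} \le 2(i+1)^{-1}(i+2)^{-1}$ (valid for $i \ge 0$); this reduces the sum to precisely the quantity already computed in the proof of Lemma~\ref{lemmultinomial3}, yielding $k\,\E[(n_1+1)^{-2}] \le \frac{2k^3}{(n+1)(n+2)}$. For the cross term I would replicate the index-shifting idea from the earlier lemmas, now applied to the joint marginal of $(n_1,n_2)$, which follows a trinomial distribution with probability vector $(1/k, 1/k, (k-2)/k)$. Specifically, the identity $\binom{n}{i,j,n-i-j}\frac{1}{(i+1)(j+1)} = \frac{1}{(n+1)(n+2)}\binom{n+2}{i+1,j+1,n-i-j}$ lets me shift to a trinomial sum of degree $n+2$, and enforcing $i+1,j+1 \ge 1$ via inclusion--exclusion yields the closed form
$$\E\!\left[\frac{1}{(n_1+1)(n_2+1)}\right] = \frac{k^2}{(n+1)(n+2)}\Bigl[1 - 2(1-1/k)^{n+2} + (1-2/k)^{n+2}\Bigr].$$

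To extract the stated bound, set $\alpha = (1-1/k)^{n+1} = \beta_k^{(n+1)/k}$ so that $(\E[V])^2 = \frac{k^4}{(n+1)^2}(1-\alpha)^2$, and apply the inequality $(1-2/k)^{n+2} \le (1-1/k)^{2(n+2)}$ (which follows from $(1-1/k)^2 - (1-2/k) = 1/k^2 \ge 0$) to bound the bracketed cross term by $(1 - (1-1/k)^{n+2})^2$. Assembling the three pieces, the leading $\frac{k^4}{(n+1)^2}$-order contributions of the cross term and of $(\E[V])^2$ are designed to cancel: the quantity $1 - (1-1/k)^{n+2}$ exceeds $1-\alpha$ by exactly $\alpha/k$, so after squaring, the mismatch contributes an additive correction of order $\alpha \cdot k^4/(n+1)^2$, which furnishes the $\beta_k^{(n+1)/k}$-dependent piece of the claim. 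The remaining $\frac{1}{(n+1)(n+2)}$ vs $\frac{1}{(n+1)^2}$ gap between the two denominators produces the residual $\frac{k^3}{(n+1)^2}$ term. I expect the main obstacle to be precisely this algebraic bookkeeping: one must carefully track how the $k(k-1)$-weighted cross expectation cancels against $(\E[V])^2$ down to lower-order terms, since a naive absolute bound on either summand alone would be far too weak to meet the claim.
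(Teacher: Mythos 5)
Your decomposition and key identity are the same as the paper's: split $\E[V^2]$ into the diagonal part (handled via $(i+1)^{-2}\le 2(i+1)^{-1}(i+2)^{-1}$ and Lemma~\ref{lemmultinomial3}) and the cross part (handled via the factorial index shift $\binom{n}{i,j,n-i-j}\frac{1}{(i+1)(j+1)}=\frac{1}{(n+1)(n+2)}\binom{n+2}{i+1,j+1,n-i-j}$). Where you diverge is in the cross term: you evaluate it exactly by inclusion--exclusion, obtaining the factor $1-2(1-1/k)^{n+2}+(1-2/k)^{n+2}$, whereas the paper simply drops the constraints $i+1,j+1\ge 1$, extends to the full trinomial sum of degree $n+2$, and gets the cruder bound $k(k-1)\E[(n_1+1)^{-1}(n_2+1)^{-1}]\le\frac{k^3(k-1)}{(n+1)(n+2)}$. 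The paper's cruder bound makes the final arithmetic one line: with $(1-\alpha)^2\ge 1-2\alpha$ for $\alpha=\beta_k^{(n+1)/k}$, one gets $2k^3+(k^4-k^3)-k^4(1-2\alpha)=k^3+2\alpha k^4$. Your sharper bound also closes (for $k\ge2$: writing $x=(1-1/k)^{n+2}=\alpha(1-1/k)$, one has $(1-x)^2-(1-\alpha)^2=\frac{\alpha}{k}(2-x-\alpha)\le\frac{2\alpha}{k}$, and the leftover $-k^3(1-x)^2$ combined with the diagonal $2k^3$ is absorbed), but you have deferred exactly this bookkeeping, and your stated accounting of the residual term is off: the $\frac{k^3}{(n+1)^2}$ does not come from the $(n+1)(n+2)$ versus $(n+1)^2$ gap (that gap points in the favorable direction and is simply discarded); it comes from the diagonal contribution $2k^3$ net of the $-k^3$ hidden in the prefactor $k(k-1)=k^2\cdot k-k^3$ of the cross term. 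So the plan is sound and essentially the paper's, but the exact closed form buys you nothing here and slightly complicates the cancellation you still need to carry out.
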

\begin{proof}
Consider the following
$\E[V^2] = \E[\sum_i  \frac{1}{(n_i+1)^2} ] +  \E[\sum_{i \not= j} \frac{1}{(n_i+1)(n_j+1)}]$.
Since $\frac{1}{(n_i+1)^2} \le \frac{2}{(n_i+1)(n_i+2)}$, by Lemma~\ref{lemmultinomial3} first term satisfies 
$$\E[\sum_i  \frac{1}{(n_i+1)^2} ] \le \E[W] \le \frac{2k^3}{(n+1)^2}.$$ For the second term, consider
\begin{align*}
\E[\sum_{i \not= j} \frac{1}{(n_i+1)(n_j+1)}] &\le k(k-1) \E\left[\frac{1}{(n_1+1)(n_2+1)}\right]\\
&= k(k-1) \sum_{i,j \ge 0} \Pr[n_1 = i \wedge n_2 = j] \frac{1}{(i+1)(j+1)}\\
&= k(k-1) \sum_{i,j \ge 0} {n \choose i,j,n-i-j} \frac{(k-2)^{n-i-j}}{k^n} \frac{1}{(i+1)(j+1)}\\
&= \frac{k(k-1)}{(n+1)(n+2)} \sum_{i,j \ge 0} {n+2 \choose i+1,j+1,n-i-j} \frac{(k-2)^{n-i-j}}{k^n}\\
&\le \frac{k(k-1)}{(n+1)(n+2)} \sum_{i,j \ge 0} {n+2 \choose i,j,n+2-i-j} \frac{(k-2)^{n+2-i-j}}{k^n}\\
&= \frac{k^3(k-1)}{(n+1)(n+2)} \le \frac{k^3(k-1)}{(n+1)^2}.
\end{align*}
Additionally, following bound holds
\begin{align*}
\Var[V] &= \E[V^2] - (\E[V])^2 \\
&\le \frac{2k^3}{(n+1)^2} + \frac{k^3(k-1)}{(n+1)^2} - \frac{k^4}{(n+1)^2}\left(1 - 2 \beta_k^{\frac{n+1}{k}}\right)\\
&\le \frac{k^3}{(n+1)^2} +  \frac{k^4}{(n+1)^2} \cdot 2 \beta_k^{\frac{n+1}{k}}. \tag*{\qedhere}
\end{align*}
\end{proof}

\subsection{Discretization.}
We note that  techniques used in Algorithm \ref{alg_rounding} can be applied with harmonic estimation, leading to a following algorithm.

\begin{algorithm}[H]
\label{alg_rounding_harmonic}
\caption{Improved estimation for Algorithm~\ref{alg_rounding}.}
	\DontPrintSemicolon
\textbf{\textrm{Procedure}} \textsc{Init()} \tcp{identical as in Algorithm~\ref{alg_rounding}}
\textbf{\textrm{Update}} \textsc{Update($x$)} \tcp{identical as in Algorithm~\ref{alg_rounding}}
\procedure{\textsc{HarmonicEstimate()}}{
	\Return $Z = k\cdot (\frac{1}{2} + \frac{1}{k}\sum_i \exp(-X_i))^{-1} - 1$\;
}
\end{algorithm}

\begin{theorem}
Applied to a stream of $n$ distinct elements, Algorithm~\ref{alg_rounding_harmonic} outputs $Z$ such that $|Z - n|  = \bigo(n k^{-1/2} + n\exp(-n/k))$ holds with probability $2/3$. It uses $k$ integer registers of size $\bigo(\log \log n)$ bits each and spends constant number of operations per single processed element of the input.
\end{theorem}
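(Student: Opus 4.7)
The plan is to mirror the proof of Theorem~\ref{mainresult} for Algorithm~\ref{alg4}, handling the randomized rounding through the coupling already introduced in the analysis of Algorithm~\ref{alg_rounding}. First, I would run Algorithms~\ref{alg_rounding_harmonic} and \ref{alg4} side-by-side on the same input stream and shared randomness, so that register by register the invariant $X'_i = f_{c_i}(X_i)$ is maintained. By property~(2) of shift-rounding, this implies $X'_i = X_i - V_i$, where $V_i \sim U[0,1]$ is independent of $X_i$ (hence of the bucket counts $n_1,\ldots,n_k$) and the $V_i$ are mutually independent. Conditioned on $n_1,\ldots,n_k$, the proof of Theorem~\ref{mainresult} already gives $X_i \sim \textsf{Gumbel}(\ln(n_i+1))$ and independence of the $X_i$ across~$i$.

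Second, I would compute the conditional first and second moments of $U'_i \defeq e^{-X'_i} = e^{-X_i}\,e^{V_i}$. Using the independence of $V_i$ from $X_i$ together with Equations~\eqref{eq:expexp} and \eqref{eq:expvar} and the elementary integrals $\E[e^{V_i}] = e-1$ and $\E[e^{2V_i}] = (e^2-1)/2$, one obtains $\E[U'_i \mid n_1,\ldots,n_k] = (e-1)/(n_i+1)$ and $\Var[U'_i \mid n_1,\ldots,n_k] = \bigo\!\left(1/(n_i+1)^2\right)$. Summing over~$i$, applying the law of total expectation/variance, and feeding the per-register bounds through Lemmas~\ref{lemmultinomial2}, \ref{lemmultinomial3}, and \ref{lemmultinomial4} exactly as in Theorem~\ref{mainresult} yields $\E[U'] = \Theta(k^2/(n+1))$ and $\Var[U'] = \bigo(k^3/(n+1)^2) + \bigo(\beta_k^{(n+1)/k}\, k^4/(n+1)^2)$ for $U' = \sum_i U'_i$, where $\beta_k = (1-1/k)^k$ is the same quantity as in Theorem~\ref{mainresult}. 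A single Chebyshev bound then concentrates $U'$ multiplicatively to order $k^{-1/2} + \beta_k^{(n+1)/(2k)}$ with probability at least $2/3$.

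Finally, this concentration translates into the claimed bound on $Z$ by direct substitution into the estimator of Algorithm~\ref{alg_rounding_harmonic} and a first-order expansion $(1+x)^{-1} = 1 - x + \bigo(x^2)$, in the regime $n \gtrsim k\log k$ where $U'/k$ is the dominant term. The space bound reuses the argument from Algorithm~\ref{alg_rounding}: each $X'_i$ takes values in $\mathbb{Z}-c_i$ of magnitude $\bigo(\log n)$ with high probability, so fits in $\bigo(\log\log n)$ bits. The main obstacle is constant-chasing rather than any new probabilistic ingredient: since $\E[e^{V_i}] = e-1 \ne 1$, randomized rounding introduces a multiplicative bias on each $U'_i$, and one must verify that the specific additive offset $\tfrac{1}{2}$ together with the $-1$ correction built into the definition of $Z$ absorb this bias inside the target error $\bigo(n k^{-1/2} + n e^{-n/k})$ uniformly across $n \ge k\log k$, and not merely to leading order.
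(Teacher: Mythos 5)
Your probabilistic skeleton is sound and is essentially the intended argument (the paper gives no proof of this theorem at all, only the remark that the techniques of Algorithm~\ref{alg_rounding} transfer): the coupling $X'_i = f_{c_i}(X_i) = X_i - V_i$ with $V_i \sim U[0,1]$ independent of $X_i$ and of the bucket counts is correct, the conditional moments $\E[e^{-X'_i}\mid n_1,\ldots,n_k] = (e-1)/(n_i+1)$ and $\E[e^{-2X'_i}\mid n_1,\ldots,n_k] = \frac{e^2-1}{(n_i+1)(n_i+2)}$ follow from Equations~\eqref{eq:expexp}--\eqref{eq:expvar} and the elementary integrals you quote, and Lemmas~\ref{lemmultinomial2}--\ref{lemmultinomial4} plus the law of total variance and Chebyshev then give the same relative concentration $\bigo(k^{-1/2} + \beta_k^{(n+1)/(2k)})$ for $U' = \sum_i e^{-X'_i}$ as in Theorem~\ref{mainresult}, up to absolute constants.

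The genuine gap is exactly the step you defer as ``constant-chasing'': it is not constant-chasing, it is where the proof fails for the estimator as literally written. Since $\E[U'] = (e-1)\,\frac{k^2}{n+1}\bigl(1-\beta_k^{(n+1)/k}\bigr)$, the quantity $\frac{1}{k}\sum_i e^{-X'_i}$ concentrates around $(e-1)\,k/(n+1) \to 0$ in the regime $n \gg k$, so the returned value $Z = k\bigl(\tfrac12 + \tfrac1k\sum_i e^{-X'_i}\bigr)^{-1} - 1$ converges to $2k-1$ rather than growing with $n$; the additive $\tfrac12$ inside the reciprocal cannot cancel a \emph{multiplicative} bias $\E[e^{V_i}] = e-1$ on each register (contrast with the geometric estimate, where the bias $\E[V_i]=\tfrac12$ is additive in the exponent and is removed by the $+\tfrac12$ shift). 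Consequently no proof of the stated error bound exists for Algorithm~\ref{alg_rounding_harmonic} as printed. Your argument does go through verbatim once the normalization is repaired to divide out the bias, e.g.\ $Z = \frac{k^2}{e-1}\bigl(\sum_i e^{-X'_i}\bigr)^{-1} - 1$, after which $\E[U'/(e-1)]$ matches the unrounded case and the Chebyshev step yields $|Z-n| = \bigo(nk^{-1/2} + n e^{-n/k})$ with probability $2/3$. You should state this correction explicitly rather than leaving it as a verification to be done.
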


\end{document}